\newcommand{\K}{\mathcal{K}}
\newcommand{\A}{\mathcal{A}}
\newcommand{\B}{\mathcal{B}}
\newcommand{\kep}{\textit{key establishment phase}}
\newcommand{\pqp}{\textit{private query phase }}
\newcommand{\evp}{\textit{source device verification phase}}
\newcommand{\bvp}{\textit{DI testing phase for Bob's measurement device}}
\newcommand{\avp}{\textit{DI testing phase for Alice's POVM elements}}
\newcommand{\CHSH}{\text{CHSH}}
\newcommand{\obs}{\text{obs}}
\newcommand{\ind}{\mathcal{I}}
\newcommand{\tr}{\mathrm{Tr}}
\newcommand{\gs}{\mathrm{guess}}
\newcommand{\h}{\mathrm{H}}
\newcommand{\I}{\mathbb{I}}
\newcommand{\id}{\mathbb{I}}
\newcommand{\m}{\vec{m}}
\newcommand{\asig}{\vec{\sigma}}
\newtheorem{theorem}{Theorem}
\newtheorem{corollary}{Corollary}
\newtheorem{lemma}{Lemma}
\newtheorem{definition}{Definition}
\newtheorem{proposition}{Proposition}
\newcommand{\wa}{W_{\alpha}}
\newcommand{\ket}[1]{\ensuremath{\left|#1\right\rangle}}
\newcommand{\bra}[1]{\ensuremath{\left\langle#1\right|}}
\begin{document}
	\title{Improved and Formal Proposal for Device Independent Quantum Private Query} 
	\author{Jyotirmoy Basak$^1$\footnote{bjyotirmoy.93@gmail.com}, Kaushik Chakraborty$^2$\footnote{kaushik.chakraborty9@gmail.com}, Arpita Maitra$^3$\footnote{arpita76b@gmail.com}, Subhamoy Maitra$^1$\footnote{subho@isical.ac.in}}
	\affiliation{$^1$Applied Statistics Unit, Indian Statistical Institute, Kolkata, India.\\
		$^2$School of Informatics, The University of Edinburgh, UK.\\
		$^3$TCG Centre for Research and Education in Science and Technology, Kolkata, India.}

	\begin{abstract}

		In this paper, we propose a novel Quantum Private Query (QPQ) scheme with full Device-Independent certification. To the best of our knowledge, this is the first time we provide such a full DI-QPQ scheme using EPR-pairs. Our proposed scheme exploits self-testing of shared EPR-pairs along with the self-testing of projective measurement operators in a setting where the client and the server do not trust each other. To certify full device independence, we exploit a strategy to self-test a particular class of POVM elements that are used in the protocol. Further, we provide formal security analysis and obtain an upper bound on the maximum cheating probabilities for both the dishonest client as well as the dishonest server.			

	\end{abstract}

	\maketitle

\section{Introduction} 
	Since the very first proposal by Chor et al. \cite{CGKS95}, both Private Information Retrieval (PIR), and Symmetric PIR have attracted extensive attention from the classical cryptography domain.  \cite{KO97,Micali99,Gentry05,William07,Giovanni00,GIKM98}. SPIR is a two-party (say Server, and Client) mistrustful crypto primitive. Informally, in SPIR one party, Client would like to retrieve some information from a database that is stored at the other party, i.e., Server's side without revealing any information about the retrieved data bits to the Server. The Server's goal is not to reveal any information about the rest of the database. The task of SPIR is similar to the 
$1$ out of $N$ oblivious transfer. Similar to most of the secure two-party cryptographic primitives, designing a secure SPIR scheme is a difficult task. 
Since the client's privacy and the database security appear to be conflicting, it is elusive to design information-theoretically secure SPIR schemes both in classical and in quantum domain \cite{GIKM98,Lo97}. This paper focuses on a more weaker version of SPIR, called Private Query (PQ), where the client is allowed to gain more information about the database 
than SPIR or $1$ out of $N$ oblivious transfer. On the other hand, the client's privacy is ensured in the sense of cheat sensitivity i.e., if the server tries to gain the information about the client's queries then the client can detect that.

The PQ primitive is weaker than SPIR but stronger than PIR. However, this type of primitive suffers from the same limitation as in the PIR schemes. 
For example, in order to respond client's query, the server must process the entire database. 
Otherwise, the server will gain  information regarding the indices corresponding to the client's query. 
Moreover, the server needs to send the encrypted version of the entire database; otherwise, it would get an estimate about the number of records 
that match the query.

In Quantum Private Query (QPQ), the client issues queries to a database and obtains the values of the data bits corresponding to the queried indices such that the client can learn a small amount of extra information about the database bits that are not intended to know by her (known as database security), whereas the server can gain a small amount of information about the query indices of the client (known as user privacy) in a cheat-sensitive way. The functionality of this QPQ primitive can be explained as a probabilistic $n$-out-of-$N$ Oblivious Transfer (here we consider $n=1$) where the client has probabilistic knowledge about the other (the bits that are not intended to know by her) database bits. 

The first protocol in this domain had been proposed by Giovannetti et al.~\cite{GLM08}, followed by 
\cite{GLM10} and \cite{Ol11}. However, all these protocols used quantum memories and none of these are 
practically implementable at this point. For implementation purpose, Jakobi et al.~\cite{jakobi} presented an idea which was 
based on a Quantum Key Distribution (QKD) protocol~\cite{SARG}. This is the first QPQ protocol based on a QKD scheme. In 2012, Gao et al.~\cite{GLWC12} 
proposed a flexible generalization of~\cite{jakobi}. Later, Rao et al.~\cite{Rao} suggested two more efficient modifications of classical 
post-processing in the protocol of Jakobi et al. In 2013, Zhang et al.~\cite{Zhang} proposed a QPQ protocol based on the counterfactual QKD scheme~\cite{Noh}. Then, in 2014, Yang et al. came up with 
a flexible QPQ protocol~\cite{Yang} which was based on the B92 QKD scheme~\cite{b92}. This domain is still developing, as 
evident from the number of recent publications~\cite{Wei,Liu}. Some of these protocols exploit entangled states to generate a shared key 
between the server (Bob) and the client (Alice). In some other protocols, a single qubit is sent to the client. The qubit is prepared 
in certain states based on the value of the key and the client has to perform certain measurements on this encoded qubit to extract the key bit. 
Although these protocols differ in the process of key generation, the basic ideas are the same. The security of all these 
protocols is defined based on the following facts.
\begin{itemize}
	\item The server (Bob) knows the whole key which would be used for the encryption of the database.
	\item The client (Alice) knows a fraction of bits of the key.
	\item Bob does not get any information about the position of the bits which are known to Alice.
\end{itemize}
It is natural to consider that one of the legitimate parties may play the role of an adversary. Alice tries to extract more information about 
the raw key bits (which implies additional information about data bits), whereas Bob tries to know the position of the bits that are known to Alice. For this reason, QPQ can be viewed as a two-party mistrustful cryptographic primitive. 
Despite its cheat-sensitive property, the server Bob and the client Alice are allowed to violate user privacy and data security, respectively, with a negligible probability based on the security requirements. In practice, the exact primitive that one tries to achieve is as follows- 
\begin{itemize}
	\item Malicious Alice can only know a small amount of additional data bits than that is intended to know by her. Here the aim is to minimise Alice's of extra information about the database.
	\item Malicious Bob can only gain a small amount of information about the query indices of Alice. Here Alice tries to hides her query indices from Bob.
\end{itemize}

Very recently, Maitra et al.~\cite{MPR17} identified that the securities of all the existing protocols are based on the fact that the communicating parties rely on their devices, i.e., the source that supplies the qubits and the detectors that measure the qubits. Thus, similar to the QKD protocols, 
the trustworthiness of the devices are implicit in the security proofs of the QPQ protocols. However, in Device Independent (DI) scenario, these trustful assumptions over the devices are removed and the security is guaranteed even after removing these assumptions. But unlike QKD, it is hard to prove DI security in the case of QPQ because of its mistrustful property. 

To remove the trustful assumptions and enhance the overall security, recently a DI-QPQ protocol has been described in~\cite{MPR17} and it's finite sample analysis has been discussed in~\cite{BM18}. In~\cite{MPR17}, the authors introduced a testing phase at the server-side and proposed a semi-device independent version of the Yang et al.~\cite{Yang} QPQ scheme. 

In this QKD based QPQ scheme~\cite{Yang} (and also in the other QKD based QPQ schemes), the main idea of partial key generation at the client's side relies on the distinction between non-orthogonal states. For the QPQ scheme~\cite{Yang}, the server Bob and the client Alice share non maximally entangled states and Alice performs projective measurements at her side on some specified basis randomly to guess the raw key bits (chosen by Bob) with certainty. 

It is well-known that contrary to the non maximally entangled states, maximally entangled states are easy to prepare in practice and are also more robust in the case of DI certification. Moreover, it is also known that POVM measurement provides optimal distinction~\cite{iva87,PT98} between non-orthogonal quantum states. 

Keeping these in mind, here we propose a novel QPQ scheme using shared EPR pairs (between the server and the client) and POVM measurement (at the client's side to retrieve the maximum number of raw key bits with certainty). Our proposed scheme provides full DI certification exploiting self-testing of EPR pairs along with self-testing of POVM measurement (at the client's side) and projective measurement (at server's side). We further provide formal security proofs (considering all the strategies that preserve the correctness condition) and obtain an upper bound on the maximum cheating probabilities for both dishonest server and dishonest client.

\subsection{Relation between QPQ and Oblivious Transfer}

Oblivious Transfer (OT) is a well-studied cryptographic primitive which was first introduced informally by Wiesner~\cite{W83} and then subsequently formalized as 1 out of 2 OT in \cite{SGL85}. In $1$ out of $N$ oblivious transfer protocol, the server Bob has a database with $N$ entries, and the client Alice wants to know one of the entries with the intention that her choice would not be made public. To ensure privacy for both the server and the client, Bob shouldn't know anything about Alice's choice and Alice shouldn't know anything extra other than her choice. This scheme is also referred to as SPIR. However, there is a minimal difference between SPIR and OT. Generally, multiple databases are involved in SPIR schemes to achieve both low communication complexity and information-theoretic security. Thus, SPIR is considered as a distributed version of $1$ out of $N$ OT. 

On the other hand, private query protocols also offer a similar kind of functionalities. However, the security requirements of QPQ schemes are generally relaxed~\cite{GLM10} to the extent that Alice's privacy is cheat-sensitive (i.e., Bob may know about the choice of Alice after the transmission using some attacks). That means the QPQ security relies on the fact that if Bob tries to infer Alice's choice, she has a non-zero probability to discover it. One can verify that the more information Bob gets about Alice's choice, the higher is the probability that he will not pass Alice's security test. Furthermore, Alice may also obtain a few more entries other than her requirement (i.e., Alice may have probabilistic knowledge about the bits that are not intended to know by her) but the total number of entries that Alice can obtain is strictly bounded. To date, all the existing QPQ protocols are designed considering a single database. Therefore, QPQ is like a probabilistic $1$ out of $N$ OT as compared to SPIR.

It is already shown in~\cite{Lo97} that information-theoretically secure two-party computational schemes are impossible in the quantum scenario. This implies the impossibility of designing an OT scheme that satisfies both client's and server's security requirements. Fortunately, due to the aforementioned relaxed security requirements of QPQ~\cite{GLM10}, it is possible to design unconditionally secure QPQ schemes. The security of classical OT is generally based on computational complexity assumptions, whereas QPQ is information-theoretically secure. So, QPQ protocols can resist all attacks (even if the attacker uses quantum resources) whereas the classical or even quantum OT protocols may not be able to defend against such attacks.

\subsection{Comparison with the exact classical primitive}


It is well-known that in the classical setting, it is impossible to design an information-theoretically secure OT or SPIR schemes. However, to the best of our knowledge, it is not known whether we can design an information-theoretically secure \emph{classical private query} (CPQ) scheme. Here, we point out that, it is very easy to come up with a naive and inefficient information-theoretically secure classical private query scheme. A rough idea of the scheme is given below.

\begin{itemize}
	\item Suppose, the client Alice wants to know $I_1$ number of bits from the $N$ bit database $X$ but asks for $I_2$ positions (that include her $I_1$ positions) to the server Bob where $I_2$ is exponentially larger than $I_1$ but exponentially smaller than $N$.
	\item Bob then returns all the bits corresponding to these $I_2$ positions to Alice. This implies that Alice can't learn more than $I_2$ bits from the database which is very small compared to the size of the entire database.
	\item On the other hand, Bob can learn about the positions of Alice's query with probability $\frac{I_1}{I_2}$ which is also very small. 
\end{itemize}

One can easily check that although this naive classical solution is information-theoretically secure, it has the following disadvantages as compared to the existing quantum solutions. 

\begin{itemize}
	\item In the naive classical solution of the private query primitive, the server Bob leaks more data bits to the client Alice as compared to the existing quantum solutions. In the above-mentioned classical solution, Alice knows an exponential amount of additional data bits as compared to the size of her intended query index set. Whereas, in the quantum scenario, Alice knows a ver small amount of additional data bits compared to the size of her intended query index set.
	\item In the mentioned classical solution, Bob can guess the query indices of Alice with a more certain probability as compared to the existing quantum solutions. In the quantum scenario, Bob guesses each of the data bits as Alice's query with non-zero probability. Whereas, in the case of this mentioned classical solution, Bob can simply eliminate $(N-I_2)$ indices (exponential number of data bits as compared to the size of the query index set) that are not asked by Alice.  
\end{itemize}

The study of designing an efficient classical private query scheme is beyond the scope of this paper, and we leave it for our future work.


\subsection{Our Contribution}

In the current report, we address the problem of performing a private search on a classical database such that the user can retrieve an item from the database (along with some probabilistic knowledge about the other data bits) and the server can learn about the client's query in a cheat- sensitive way such that the data privacy and the user privacy are both preserved. 
As it is a distrustful cryptographic protocol, data privacy and, user security contradict each other. Moreover, because of the cheat sensitivity, if any of the parties try to violate data privacy or user security, it will be detected by the other party. The QPQ schemes are mainly different from the traditional QKD schemes in the following two aspects-

\begin{itemize}
	\item In QKD, the parties Alice and Bob both know all the bits of their shared raw key. However, in QPQ, only the server Bob knows all the bits of the shared raw key, and the client Alice knows only some bits of the shared raw key.
	\item In QKD, both the parties i.e., Bob and Alice trust each other and any third party will act as an adversary. Contrary to this, in QPQ, neither of the parties trust each other and any one of them may act as an adversary. 
\end{itemize} 

As QPQ is a distrustful cryptographic primitive, it is much harder to prove Device Independence (DI) in this setting. Keeping this in mind, in this proposed scheme, we try to maintain data privacy as well as user security (so that no significant information is leaked to any of the parties) and also try to maintain the cheat sensitive property (i.e., if any of the party tries to violate the security then this party will be caught by the other party).

Our main contribution in this paper is threefold which we enumerate below:

\begin{enumerate}

	\item We propose a novel QPQ scheme and remove the trustworthiness from the devices (source as well as measurement devices) using the self-testing of EPR pairs, self-testing of projective measurements (mentioned in \cite{kan17}) and self-testing of POVM measurements. Recently, Maitra et al.~\cite{MPR17} proposed a semi DI version of the QPQ scheme~\cite{Yang}. However, the QPQ scheme~\cite{Yang} uses non maximally entangled states which are difficult to prepare in practice and are also less robust in the case of DI certification as compared to the maximally entangled states. Keeping this in mind, here we propose a QPQ scheme using EPR pairs and a proper self-testing mechanism that guarantees full DI security of our protocol. To the best of our knowledge, this is the first time we provide such a full DI-QPQ scheme.

	\item We replace the usual projective measurement at client Alice's side with optimal POVM measurement so that (on average) Alice can obtain maximum raw key bits with certainty and (possibly) retrieve the maximum number of data bits in a single query. We also show that our proposed scheme provides (on average) the maximum number of raw key bits with certainty for Alice. 
	
	\item  Contrary to all the existing QPQ protocols, in the present effort, we provide a general security analysis (considering all the attacks that preserve the correctness condition) and provide an upper bound on the cheating probabilities (i.e., a lower bound on the amount of information leakage in terms of entropy) for both the parties (the server as well as the client).
\end{enumerate}
	
	\subsection{Notations and Definitions}
	Let us first list a few notations.
	\begin{itemize}
		\item $\K$: Initial number of states for our proposed scheme. Here, we assume that $\K$ is asymptotically large.
		\item $\mathbb{I}_k$: the Identity matrix of dimension $k$.
		\item $\mathcal{A} (\mathcal{A^*})$: honest (dishonest) client Alice.
		\item $\mathcal{B} (\mathcal{B^*})$: honest (dishonest) server Bob.
		\item $\mathcal{A}_i (\mathcal{A}_i^*)$: the $i$-th subsystem corresponding to honest (dishonest) Alice.
		\item $\mathcal{B}_i (\mathcal{B}_i^*)$: the $i$-th subsystem corresponding to honest (dishonest) Bob.
		\item $\ket{\phi}_{{\A}_i {\B}_i}$: the $i$-th copy of the shared state where 1st qubit corresponds to 
		Alice and 2nd qubit corresponds to Bob.
		\item $\rho_{{\A}_i{\B}_i}$: the density matrix representation for the $i$-th shared state.
		\item $\rho_{\mathcal{A}_i} (\rho_{\mathcal{B}_i})$: the reduced density matrix at Alice's (Bob's) side for $i$-th shared state.
		\item $X$: the $N$-bit database which corresponds to server Bob.
		\item $R (R_{\A})$: the entire raw key corresponding to Bob (Alice) of size $kN$ bits for some integer $k>1$.
		\item $F (F_{\A})$: the entire final key corresponding to Bob (Alice) of size $N$ bits.
		\item $R_i (R_{\A_i})$: the $i$-th raw key bit at Bob's (Alice's) side.
		\item $F_i (F_{\A_i})$: the $i$-th final key bit at Bob's (Alice's) side.
		\item $k$: Number of raw key bits XORed to generate every bit of the final key.
		\item $\mathcal{I}_l$: the index set of the elements which are queried by the client Alice.
		\item $l$: Size of Alice's query index set $\mathcal{I}_l$ i.e., $l=|\mathcal{I}_l|$.
		\item $a_i$: the classical bit announced by Bob for $i$-th shared state.
		\item $A (B)$: measurement outcome at Alice's (Bob's) side.
		\item $|0'\rangle = \cos \theta|0\rangle + \sin \theta |1\rangle$.
		\item $|1'\rangle = \sin \theta|0\rangle - \cos \theta |1\rangle$.
		\item $\in_{R}$: uniform random selection from a given set.
	\end{itemize}
	Next we present a few definitions that will be required for further discussions.
	\begin{itemize}
		\item {\bf Trace Distance:} The trace distance allows us to compare two probability distributions $\{ p_i \}$ and $\{ q_i \}$ 
		over the same index set which can be defined as
		\begin{eqnarray*}
			Dist(p_i, q_i)&=&\frac{1}{2} \sum_{i} |p_i - q_i|.
		\end{eqnarray*} 
	\end{itemize}
	\begin{itemize}
		\item In quantum paradigm, the trace distance is a measure of closeness of two quantum states $\rho$ and $\sigma$. The trace norm of an operator $M$ is defined as,
		\begin{eqnarray*}
			||M||_1&=& Tr{|M|},
		\end{eqnarray*}
		where $|M| = \sqrt{M^{\dagger}M}$.
		The trace distance between quantum states $\rho$ and $\sigma$ is given by,
		\begin{eqnarray*}
			Dist(\rho, \sigma)&=& Tr|\rho - \sigma|  \\
			&=& ||\rho - \sigma ||_1,
		\end{eqnarray*}
		where $|A| = \sqrt{A^{\dagger} A}$ is the positive square root of $\sqrt{A^{\dagger} A}$. 
	\end{itemize}
	\begin{itemize}
		\item  {\bf Fidelity:} Like trace distance, fidelity is an alternative measure of closeness. In terms of fidelity, the similarity 
		between the two probability distributions $\{ p_i \}$ and $\{ q_i \}$ can be defined as,  
		\begin{eqnarray*}
			F(p_i, q_i)&=& \left(\sum_{i} \sqrt{p_i q_i}\right)^2. 
		\end{eqnarray*}
	\end{itemize}
	\begin{itemize}
		\item The fidelity of two quantum states $\rho$ and $\sigma$ is defined as
		\begin{eqnarray*}
			F(\rho, \sigma)&=& \left[Tr (\sqrt{\rho^{1/2} \sigma \rho^{1/2}})\right]^2. 
		\end{eqnarray*}
	\end{itemize}
	\begin{itemize}
		\item In case of pure states, the fidelity is a squared overlap of the states $|\psi\rangle$ and $|\phi\rangle$, i.e.,
		\begin{eqnarray*}
			F(\rho, \sigma)&=& |\langle \psi | \phi \rangle |^2,
		\end{eqnarray*}
		where $\rho = |\psi\rangle\langle\psi|$ and $\sigma = |\phi\rangle\langle\phi|$ are corresponding density matrix representation of the pure states $|\psi\rangle$ and $|\phi\rangle$ respectively.
	\end{itemize}
	\begin{itemize}
		\item The two measures of closeness of quantum states, trace distance and fidelity, are related by the following inequality \cite{FG99},
		$$1 - \sqrt{F(\rho, \sigma)} \leq \frac{1}{2} Tr|\rho - \sigma| \leq \sqrt{1 - F(\rho, \sigma)}.$$ 
	\end{itemize}
	\begin{itemize}
		\item Trace distance has a relation with the distinguishability of two quantum states. Suppose, one referee prepares two quantum states $\rho$ and $\sigma$ for 
		another party (say Alice) to distinguish. The referee prepares each of the states with probability $\frac{1}{2}$. Let $p_{\text{correct}}$ denotes the optimal 
		guessing probability for Alice and it is related to trace distance by the following expression, 
		$$p_{correct} = \frac{1}{2} \left(1 + \frac{1}{2} Tr|\rho - \sigma |\right).$$
		It implies that trace distance is linearly dependent to the maximum success probability in distinguishing two quantum 
		states $\rho$ and $\sigma$. For further details one may refer to \cite{Wilde17}.
	\end{itemize}
	\begin{itemize}
		\item {\bf Conditional Minimum Entropy:} Let $\rho = \rho_{AB}$ be the density matrix representation of a bipartite quantum 
		state. Then the conditional minimum entropy of subsystem $A$ conditioned on subsystem $B$ is defined by (\cite{KRS09}) 
		\begin{eqnarray*}
			H_{min}(A|B)_{\rho} &=& -\inf_{\sigma_B} D_{\infty}(\rho_{AB}||\mathbb{I}_A \otimes \sigma_B),
		\end{eqnarray*}
		where $\mathbb{I}_A$ denotes the identity matrix of the dimension of system $A$ and the infimum ranges over all normalized density operators 
		$\sigma_B$ on subsystem $B$ and also for any two density operators $T,T'$ we define, $$D_{\infty}(T||T') = \inf\{ \lambda \in \mathbb{R}: T \leq 2^{\lambda} T' \}.$$ 
	\end{itemize}
	\begin{itemize}
		\item Let $\rho_{XB}$ be a bipartite quantum state where the $X$ subsystem is classical. For the given state $\rho_{XB}$ if $p_{\text{guess}}(X|B)_{\rho_{XB}}$ denotes the maximum probability of guessing $X$ given the subsystem $B$, then from \cite{KRS09} we have, 
		\begin{equation}
		\label{pguess}
		p_{\text{guess}}(X|B)_{\rho_{XB}} = 2^{-H_{min}(X|B)_{\rho}}.
		\end{equation}
	\end{itemize}

	
	\subsubsection{Adversarial Model}
	
	As Quantum Private Query (QPQ) is a distrustful cryptographic primitive, here each of the parties have different security goals. The security requirement of the entire protocol is termed as \textit{Protocol Correctness} whereas the security of the server (Bob) is termed as \textit{Data Privacy} and the security requirement for the client (Alice) is termed as \textit{User privacy}. Formally, these terms are defined below.
	
	\begin{definition}{Protocol Correctness:}
		\label{prot_corr}
		
		If the user (i.e., the client) Alice and the database owner (i.e., the server) Bob both are honest, then after the \kep, the probability that Alice can correctly retrieve the expected number of data bits in a single database query is very high. This implies that in case of honest implementation of the protocol, if $X$ denotes the actual number of data bits known by Alice and $E[X]$ denotes the expected number of data bits that are supposed to be known by Alice then, after the \kep,
		
		\begin{equation}
		\noindent
		\Pr(|X-E[X]| \leq \delta_t \wedge \text{the protocol does not abort}) \geq P_c
		\end{equation} 
		
		where $\delta_t$ denotes the amount of deviation allowed by Bob and $P_c$ denotes the probability with which the value of $X$ lies within the interval $[E[X]-\delta_t,E[X]+\delta_t]$ (ideally, the value of $P_c$ should be high). 
		
	\end{definition}

	\begin{definition}{Protocol Robustness:}
		\label{prot_rob}
		
		If the user (i.e., the client) Alice and the database owner (i.e., the server) Bob both are honest, then after the \kep~of our proposed scheme, the probability that Alice will know none of the final key bits (as well as the database bits) and the protocol has to be restarted is very low. More formally,
		
		\begin{equation}
		\noindent
		\Pr(\text{the protocol aborts in honest scenario}) \leq P_a
		\end{equation}  
		
		where $P_a$ denotes the probability that Alice knows none of the final key bits and aborts the protocol (ideally, the value of $P_a$ should be small).
		
	\end{definition}

		
		\begin{definition}{Data Privacy:}
			\label{data_priv}
			
			A QPQ protocol satisfies the data privacy property if either the protocol aborts with high probability in the asymptotic limit, or in a single database query, dishonest Alice's strategy ($\A^*$) can correctly extract (on average) at most $\tau$ fraction of bits of the $N$-bit database $X$ where $\tau (0 < \tau < 1)$ is very small compared to the size of the entire database i.e., $N$. This implies that if $D_{\A^*}$ denotes the number of data bits that dishonest Alice can extract (on average) in a particular query then,
			\begin{equation}
			\noindent
			E_R(D_{\A^*}) \leq \tau N
			\end{equation}
			where the expectation is taken all over the random coins $R$ that are used in the protocol.

			The data privacy against dishonest Alice can also be defined in terms of the success probability in guessing more than the expected number of data bits. In this notion, after the \kep, either the protocol aborts with high probability in the asymptotic limit, or the probability that dishonest Alice ($\A^*$) correctly retrieves more than the expected number of data bits and the protocol does not abort is very low. This implies that for dishonest Alice ($\A^*$), if $X$ denotes the actual number of data bits known by Alice and $E[X]$ (i.e, $E_R(D_{\A^*})$) denotes the expected number of data bits that are actually supposed to know by Alice then after the \kep, 
			
			\begin{equation}
			\noindent
			\Pr(|X-E[X]| > \delta_t \wedge \text{the protocol doesn't abort}) \leq P_d
			\end{equation} 
			
			where $\delta_t$ denotes the amount of deviation allowed by Bob and $P_d$ denotes the probability with which the value of $X$ lies outside the interval $[E[X]-\delta_t,E[X]+\delta_t]$ such that the protocol does not abort (ideally, the value of $P_d$ should be very small). 
			
		\end{definition}

		\begin{definition}{User Privacy:}
			\label{user_priv}
			
			Let $\ind_l = \{i_1, \ldots, i_l \}$ denotes the indices of the data bits that Alice wants to know from the database by performing $l$ many queries. Then for a QPQ protocol, after $l$ many queries, either the protocol aborts with high probability in the asymptotic limit, or the dishonest Bob's strategy ($\B^*$) can correctly guess (on average) at most $\delta$ fraction of indices from the index set $\ind_l$ where $\delta$ ($0 < \delta <1$) is very small compared to the size of the index set i.e., $l$. This implies that after $l$ many queries to the database by Alice, if $\ind_{\B^*}$ denotes the number of correctly guessed indices by dishonest Bob then, 
			
			\begin{equation}
			\noindent
			E_{R'}(\ind_{\B^*}) \leq \delta l
			\end{equation}
			
			where the expectation is taken all over the random coins $R'$ that are used in the protocol.
			
			The user privacy against dishonest Bob can also be defined in terms of the success probability in guessing a query index correctly from Alice's query index set. In this notion, either the protocol aborts with high probability in the asymptotic limit, or the probability that dishonest Bob ($\B^*$) can correctly guess an index from Alice's query index set ($\ind_l$) and the protocol does not abort is very low. This implies that after $l$ many queries to the database by Alice, if dishonest Bob guesses an index $i$ from the database and the protocol does not abort then the probability that $i$ is in $\ind_l$ is very low i.e.,
			
			\begin{equation}
			\noindent
			\Pr(\text{Bob guesses } i \in \ind_l \wedge \text{the protocol does not abort}) \leq P_u
			\end{equation} 
			
			where $P_u$ denotes the probability that $i$ is in $\ind_l$ and the protocol does not abort (ideally, the value of $P_u$ should be very small).
		\end{definition}


	\subsection{Assumptions for Our Device Independent Proposal}
	
	In this section we mention the list of assumptions that are required for the security of our proposed QPQ scheme. Those assumptions are summarized as follows.
	
	\begin{enumerate}
		\item Devices follow the laws of quantum mechanics i.e., the quantum states and the measurement operators involved in this scheme lead to the observed outcomes via the Born rule.
			\item Like the recent DI proposal for oblivious transfer from the bounded-quantum-storage-model and computational assumptions in~\cite{BY21}, here also we assume that the state generation device and the measurement devices (both at honest and dishonest party's end) are described by a tensor product of Hilbert spaces, one for each device. That means for this proposal, we assume that the devices follow the {\it i.i.d.} assumptions such that each use of a device is independent of the previous use and they behave the same in all trials. This also implies that the statistics of all the rounds are independent and identically distributed (i.i.d.) and the devices are memoryless. We also assume that the honest party chooses the inputs randomly and independently for each rounds.
			
			\emph{Note :} As QPQ is a distrustful primitive, to detect the fraudulent behavior (if any) of the dishonest party, the {\it i.i.d.} assumption on the inputs chosen by the honest party seems justified here. It is also possible to consider more general scenarios for this distrustful primitive without imposing the {\it i.i.d} assumption on the devices, but these are outside the scope of this work.
			
			\item The honest party can interact with the unknown devices at his end only by querying the devices with the inputs and getting the corresponding outputs whereas the dishonest party can manipulate all the devices before the start of the
			protocol. However, we assume that after the protocol starts, the dishonest party can no longer change this behavior - s/he cannot manipulate any devices held by the honest party, and also cannot ``open up" any devices s/he possesses at her/his end (i.e. the dishonest party is also restricted to only
			supplying the inputs and getting the corresponding outputs from those devices after the start of the protocol). We also
			assume that the dishonest party processes their data in an {\it i.i.d.} fashion.
			
		\item Generally, in the Device Independent (DI) scenario, it is assumed that Alice's and Bob's laboratories are perfectly secured, i.e., there is no communication between the laboratories. As QPQ is a distrustful scheme, here we assume that each party's aim is not only to retrieve as much additional information as possible from the other party but also to leak as little additional information as possible from his side. For this reason, while testing the cheating of a dishonest party in a particular testing phase, the other party must act honestly in that test to detect the fraudulent behavior (if any) of the dishonest party. If both the parties act deceitfully in any testing phase, then none of them can detect the cheating of the other party. So, one party must act honestly in every testing phase. 
		
		Here we assume that in the local tests, the party who acts honestly in a particular testing phase chooses the input bits randomly for the devices at his end (on behalf of the referee), and self-tests the devices. So, for local tests, there is no communication between the laboratories. 
		
		Similarly, whenever they perform distributed tests (i.e., the tests performed by both of them with the shared states), we assume that the honest party chooses the input bits for both the parties on behalf of the referee. Then the parties measure their qubits accordingly, and the dishonest party announces the output bits for the samples chosen for the test.
		
		This implies that for the local tests involved in this scheme, there is no communication between the laboratories, whereas, for the distributed tests, we allow communication regarding the input bits and the output bits from the honest party's end and the dishonest party's end respectively.
		
		We also assume that the honest party can somehow ``shield" his devices such that no information (regarding the inputs and the outputs) is leaked from his laboratory until he chooses to announce something.
		
		\emph{Note :} Here, one may think that in case of distributed test, the dishonest party may not measure his qubits according to the values of the input bits chosen by the honest party. In that case, how the honest party can detect this dishonest behaviour in the corresponding testing phase is clearly mentioned later in the analysis of {\it device independent security}. 
		\item The inputs for self-tests are chosen freely and independently of all the other systems involved in the protocol i.e., the device used to generate input bits for one party does not have any correlations (classical or quantum) with the particles or the source or the laboratory of the other party.
	\end{enumerate}

\section{Proposal for a full DI-QPQ scheme}

The QPQ protocols are composed of several phases. Depending on the functionality, we have divided the entire protocol into five phases. 
The first phase is termed ``entanglement distribution phase". In this phase, a third party (need not be a trusted one and may collude with the dishonest party) distributes several copies of entangled states between the server (Bob) and the client (Alice). The next phase is called ``source device verification phase". 
In this phase, the server and the client self-test their shared entangled states using CHSH game. The third phase is termed as ``DI testing for Bob's measurement device". In this phase, Bob self-tests his measurement device (in some specific measurement basis that will be used for the QPQ protocol). 

In QPQ, before the protocol, the server Bob decides how much information the client Alice can retrieve from the database in a single query. For this reason, Bob chooses a parameter $\theta$ and performs measurements on his qubits (of the shared entangled states) in this $\theta$ rotated basis (during the protocol) to restrict Alice's information about the database~\footnote{Once chosen, this value of $\theta$ remains fixed for the entire QPQ protocol}. As Alice and Bob get the measurement devices from an untrusted third party, (in device-independent setting) they need to check the devices before proceeding with the protocol. Here we assume that dishonest Bob's aim is not only to know Alice's query indices but also to leak as little additional information about the database as possible. For this reason, in ``DI testing phase for Bob's measurement device", only Bob will act as a referee and choose input bits for both parties. They first perform some measurements assuming the devices as unknown boxes and then after getting the outcome, they conclude about their functionality. After measurement, if the probability of winning the specified game is equal to some predefined value, then they can conclude that Bob's measurement devices are noiseless for those specified bases. 

The next phase of this protocol is termed ``DI testing for Alice's POVM elements". In the phase, Alice first performs specific measurements assuming the POVM devices as unknown boxes and then concludes about their functionality based on the outcome i.e., in this phase Alice checks the functionality of her POVM device. If the POVM device works as expected, then Alice and Bob generate key bits in the next phase for the remaining instances which is termed as ``Key Establishment Phase". After this phase, Bob has a secret key such that Alice knows some of those bits and Bob doesn't know the indices of the bits known by Alice.

In the last phase, i.e., in ``private query phase", Bob encrypts the database using the key generated at his side and sends the encrypted database to Alice. Alice then decrypts the intended data bits using 
the known key bits at her side.\\

Now we describe different steps of our proposal. Note that we haven't considered the channel noise here. So, we assume here that all the operations are perfect. 
	
	\begin{enumerate}
		\item \textbf{Entanglement Distribution Phase:}
		\begin{enumerate}
			\item A third party distributes $\K$ copies (where we assume that $\K$ is asymptotically large) of EPR pairs $|\phi\rangle_{\A\B}$ between Alice and Bob such that Alice (Bob) receives $\A$ ($\B$) 
			subsystem of $|\phi\rangle_{\A\B}$. 
		\end{enumerate}

		\item \textbf{Source Device Verification Phase:}
		
		The source device verification phase is composed of two subphases. In the first subphase, Bob acts as a referee, chooses random samples (for testing phase), receives the corresponding qubits from Alice, generates random input bits for those instances and performs a localCHSH test to certify the states. Similarly, in the second subphase, Alice acts as a referee and does the same that Bob does in the previous phase. In each phase, after receiving the inputs, Alice's and Bob's device measure the states and return output bits $(c_i, b_i)$. The detail description of different subphases is as follows.

		\IncMargin{-1em}
		\RestyleAlgo{boxruled}
		\LinesNumbered
		\begin{algorithm}[htbp]
			
			\begin{itemize}
				\item For each $i \in \mathcal{S}$, $\mathcal{P}$ does the following- 
				\begin{enumerate}
					\item For the inputs $s_i = 0$ and $s_i = 1$, $\mathcal{P}$'s device performs a measurement in the first qubit of the $i$-th state and outputs $c_i = 0$ or $c_i = 1$. 
					\item For the inputs $r_i = 0$ and $r_i = 1$, $\mathcal{P}$'s device performs a measurement in the second qubit of the $i$-th state and outputs $b_i = 0$ or $b_i = 1$. 
				\end{enumerate} 
				\item From the inputs $s_i,r_i$ and corresponding outputs $c_i,b_i$, $\mathcal{P}$ estimates the following quantity, 
				\hspace{-0.8in}	
				\begin{align*}
				\mathcal{C} &= \frac{1}{|\mathcal{S}|} \sum_{i \in \mathcal{S}}\mathcal{C}_i\\
				\end{align*}
				where the parameter $\mathcal{C}_i$ is defined as follows ,
				\begin{align*}
				\mathcal{C}_i := 
				\begin{cases}
				1  \quad \text{If }s_ir_i = c_i \oplus b_i\\
				0  \quad \text{otherwise}.
				\end{cases}
				\end{align*}
				\item If $\mathcal{C} = \cos^2{\frac{\pi}{8}}$\footnote{In the case of honest implementation, this exact desired value can be obtained for asymptotically large number of samples. However, in practice, with finite number of samples, it is nearly always impossible to exactly match with the desired value of the estimated statistic. Hence, a small deviation from the desired value is allowed in practice. A discussion regarding the variation of the deviation range with the sample size is mentioned in Appendix A. However, how the existing security definitions will vary with the noise parameter, is out of the scope of this present work and we will try to explore this issue in our future works. } then the protocol continues.
				\item Otherwise, the protocol aborts.
			\end{itemize}
			
			\caption{LoaclCHSHtest($\mathcal{S}, \mathcal{P}$)}
			\label{chshtest} 
		\end{algorithm}
		\DecMargin{-1em}

		
		\begin{enumerate}
			\item Bob randomly chooses $\frac{\gamma_1 \K}{2}$ (how Bob and Alice choose the specific value of $\gamma_1$ from the set $(0,1)$ is mentioned in Appendix A) instances from the shared $\K$ instances, declares the instances publicly and constructs a set $\Gamma^{\B}_{\CHSH}$ with these instances.
			\item Alice sends her qubits for all the instances in $\Gamma^{\B}_{\CHSH}$ to Bob.
			\item For these instances in $\Gamma^{\B}_{\CHSH}$, Bob plays the role of the referee as well as the two players and plays local CHSH game.
			\item  For every $i$-th sample in $\Gamma^{\B}_{\CHSH}$, Bob generates random bits $r_i \in_R \{0,1\}$ and $s_i \in_R \{0,1\}$ as the inputs of his two measurement devices (these devices act as the devices of two different parties).  
			\item Bob performs LocalCHSHtest($\Gamma^{\B}_{\CHSH}$, Bob), mentioned in algorithm \ref{chshtest} (which is nothing but the local version of actual CHSH game) for the set $\Gamma^{\B}_{\CHSH}$.
			\item If Bob passes the LocalCHSHtest($\Gamma^{\B}_{\CHSH}$, Bob) game then they proceed further, otherwise they abort.
			\item From the rest $\left(\K-\frac{\gamma_1 \K}{2}\right)$ instances, Alice randomly chooses $\frac{\gamma_1 \K}{2}$ (how Bob and Alice choose the specific value of $\gamma_1$ from the set $(0,1)$ is mentioned in Appendix A) instances, declares the instances publicly and constructs a set $\Gamma^{\A}_{\CHSH}$ with these instances.
			\item Bob sends her qubits for all the instances in $\Gamma^{\A}_{\CHSH}$ to Alice.
			\item For these instances in $\Gamma^{\A}_{\CHSH}$, Alice plays the role of the referee as well as the two players and plays local CHSH game.
			\item  For every $i$-th sample in $\Gamma^{\A}_{\CHSH}$, Alice generates random bits $r_i \in_R \{0,1\}$ and $s_i \in_R \{0,1\}$ as the inputs of her two measurement devices (these devices act as the devices of two different parties).  
			\item Alice performs LocalCHSHtest($\Gamma^{\A}_{\CHSH}$, Alice), mentioned in algorithm \ref{chshtest} (which is nothing but the local version of actual CHSH game) for the set $\Gamma^{\A}_{\CHSH}$.
			\item If Alice passes the LocalCHSHtest($\Gamma^{\A}_{\CHSH}$, Alice) game then they proceed to the next part of the protocol where Bob self-tests his observables, otherwise they abort. 
		\end{enumerate}	  
		

		\item \textbf{DI Testing for Bob's Measurement Device:}

		\IncMargin{-1em}
		\RestyleAlgo{boxruled}
		\LinesNumbered
		\begin{algorithm}[htbp]
			
			\begin{itemize}
				\item For each $i \in \mathcal{S}$, Alice and Bob does the following- 
				\begin{enumerate}
					\item Bob generates a random bit $s_i \in_R \{0,1\}$ as an input of Alice's device and declares the input publicly. 
					\item For the inputs $s_i = 0$ and $s_i = 1$, Alice's device performs a measurement in her part of the $i$-th copy of the shared states and outputs $c_i = 0$ or $c_i = 1$. 
					\item Bob already generates the input bits $r_i=0$ or $r_i=1$ randomly for his measurement device in the $i$-th instance and obtains the outcome $b_i=0$ or $b_i=1$.  
					\item Alice and Bob declare their inputs $s_i,r_i$ and corresponding outputs $c_i,b_i$.
				\end{enumerate} 
				\item From the declared outcomes Alice and Bob estimate the following quantity, 
				\hspace{-0.8in}	
				\begin{align*}
				\beta &= \frac{1}{4} \sum_{s,r,c,b \in \{0,1\}}(-1)^{d_{srcb}}\alpha^{1\oplus s} \langle \phi_{\A\B}| A^s_c \otimes B^r_b |\phi_{\A\B}\rangle\\
				\end{align*}
				where $\alpha = \frac{(\cos{\theta}+\sin{\theta})}{|(\cos{\theta}-\sin{\theta})|}$ and $d_{srcb}$ is defined as follows ,
				\begin{align*}
				\label{cuv}
				d_{srcb} := 
				\begin{cases}
				0  \quad \text{If }sr = c \oplus b\\
				1  \quad \text{otherwise}.
				\end{cases}
				\end{align*}
				\item If $\beta = \frac{1}{\sqrt{2}|(\cos{\theta}-\sin{\theta})|}$\footnote{In the case of honest implementation, this exact desired value can be obtained for asymptotically large number of samples. However, in practice, with finite number of samples, it is nearly always impossible to exactly match with the desired value of the estimated statistic. Hence, a small deviation from the desired value is allowed in practice. A discussion regarding the variation of the deviation range with the sample size is mentioned in Appendix A. However, how the existing security definitions will vary with the noise parameter, is out of the scope of this present work and we will try to explore this issue in our future works. } then the protocol continues.
				\item Otherwise, the protocol aborts.
			\end{itemize}
			
			\caption{OBStest($\mathcal{S}$)}
			\label{obstest} 
		\end{algorithm}
		\DecMargin{-1em}
		
		\begin{enumerate}		        
			\item Let us assume that in the previous phase (i.e., in \evp), Bob and Alice select total $|\Gamma_{\CHSH}|$ samples where $\Gamma_{\CHSH}=\Gamma^{\A}_{\CHSH} \cup \Gamma^{\B}_{\CHSH}$. In this phase, they consider the rest $(\K-|\Gamma_{\CHSH}|)$ instances and for $1\leq i \leq (\K-|\Gamma_{\CHSH}|)$, Bob does the following-
			\begin{itemize}
				\item Bob first generates a random bit $r_i$ (i.e., bit $r_i \in_{R} \{0,1\}$) for the $i$-th instance (basically these random bits are the raw key bits at Bob's side i.e., $R_i = r_i$).
				\item If $r_i = 0$, Bob's device applies measurement operator $\{B_0^0, B_1^0 \}$, and generates the output $b_i=0$ and $b_i=1$ respectively.
				\item If $r_i = 1$, Bob's device applies measurement operator $ \{B_0^1, B_1^1\}$, and generates the output $b_i=0$ and $b_i=1$ respectively.
				\item Bob declares $a_i = 0$ whenever his device outputs $b_i=0$ (i.e., the device applies measurement operator $B^0_0$ or $B^1_0$ for the $i$-th instance).
				\item Bob declares $a_i = 1$ whenever his device outputs $b_i=1$ (i.e., the device applies measurement operator $B^0_1$ or $B^1_1$ for the $i$-th instance). 
			\end{itemize}
			
			\item From these $(\K-|\Gamma_{\CHSH}|)$ instances, Bob randomly chooses $\frac{\gamma_2(\K-|\Gamma_{\CHSH}|)}{2}$ (how Bob and Alice choose the specific value of $\gamma_2$ from the set $(0,1)$ is mentioned in Appendix A) instances, declares the instances publicly and constructs a set $\Gamma^{\B}_{\obs}$ with these instances.
			\item Alice then randomly chooses $\frac{\gamma_2(\K-|\Gamma_{\CHSH}|)}{2}$ (how Bob and Alice choose the specific value of $\gamma_2$ from the set $(0,1)$ is mentioned in Appendix A) instances from the rest $(\K-|\Gamma_{\CHSH}|-|\frac{\gamma_2(\K-|\Gamma_{\CHSH}|)}{2}|)$ instances, declares the instances publicly and make a set $\Gamma^{\A}_{\obs}$ with these instances. 
			\item Alice and Bob construct a set $\Gamma_{\obs}$ with all their chosen instances i.e., $\Gamma_{\obs} = \Gamma^{\A}_{\obs} \cup \Gamma^{\B}_{\obs}$.
			\item Alice and Bob perform OBStest($\Gamma_{\obs}$), mentioned in algorithm \ref{obstest}, for the set $\Gamma_{\obs}$.

		\end{enumerate}

		\item \textbf{DI Testing for Alice's POVM Elements:}
		\begin{enumerate}
			\item After the DI testing phase for Bob's measurement device, Alice and Bob proceed to this phase with the rest $(\K-|\Gamma_{\CHSH}|-|\Gamma_{\obs}|)$ shared states. Let us denote this set as $\Gamma_{\text{POVM}}$.
			\item Alice randomly chooses $\gamma_3 |\Gamma_{\text{POVM}}|$ (how Alice chooses $\gamma_3$ is mentioned in Appendix A) samples from the rest shared $|\Gamma_{\text{POVM}}|$ states. We call this set as $\Gamma^{\text{test}}_{\text{POVM}}$. Alice performs KEYgen($\Gamma^{\text{test}}_{\text{POVM}}$), mentioned in algorithm \ref{povmtest}, for the set $\Gamma^{\text{test}}_{\text{POVM}}$.
			\item Alice then performs POVMtest($\Gamma^{\text{test}}_{\text{POVM}}$), mentioned in algorithm \ref{POVMtest}, for the set $\Gamma^{\text{test}}_{\text{POVM}}$.  
		\end{enumerate}
		

		\IncMargin{-1.0em}
		\RestyleAlgo{boxruled}
		\LinesNumbered
		\begin{algorithm}[htbp]
			
			\begin{itemize}
				\item For each $i \in \mathcal{S}$, Alice does the following-
				\label{stp} 
				\begin{enumerate}
					\item If Bob declared $a_i = 0$, Alice measures her qubit of the $i$-th shared state using the measurement device $M^0=\{M^0_0, M^0_1, M^0_2\}$.
					\item If Bob declared $a_i = 1$, Alice measures her qubit of the $i$-th shared state using the measurement device $M^1=\{M^1_0, M^1_1, M^1_2\}$.
				\end{enumerate}
			\end{itemize}
			\caption{KEYgen($\mathcal{S}$)}
			\label{povmtest}
		\end{algorithm}
		\DecMargin{-1.0em}
		

		\IncMargin{-1em}
		\RestyleAlgo{boxruled}
		\LinesNumbered
		\begin{algorithm}[htbp]
			\begin{itemize}
				\item Alice considers all those instances of the set $\mathcal{S}$ where Bob declared $a_i=0$ and creates a set $\mathcal{S}^0$ with those instances.
				\item Similarly, with the rest of the instances (i.e., the instances where Bob declared $a_i=1$), Alice creates a set $\mathcal{S}^1$.
				\item Let us assume that $y$ denotes the value of $a_i$ and for the set $\mathcal{S}^y$, the states at Alice's side are either $\rho^y_x$ or $\rho^y_{x \oplus 1}$ (for input $x \in_R \{0,1\}$ at Bob's side). 
				\item For each set $\mathcal{S}^y$, Alice calculates the value of the parameter \begin{align*}
				\Omega^{y}&=\sum_{b,x \in\{0,1\}} (-1)^{b\oplus x} \tr[M^{y}_b\rho^{y}_x]
				\end{align*} 
				where $M^y_b$ is the measurement outcome at Alice's side in KEYgen().
				\item If for every $\mathcal{S}^y$ ($y \in \{0,1\}$), $$\Omega^{y} = \frac{2\sin^2{\theta}}{(1+\cos{\theta})}$$\footnote{In the case of honest implementation, this exact desired value can be obtained for asymptotically large number of samples. However, in practice, with finite number of samples, it is nearly always impossible to exactly match with the desired value of the estimated statistic. Hence, a small deviation from the desired value is allowed in practice. A discussion regarding the variation of the deviation range with the sample size is mentioned in Appendix A. However, how the existing security definitions will vary with the noise parameter, is out of the scope of this present work and we will try to explore this issue in our future works. } then the protocol continues. 
				\item Otherwise, the protocol aborts.
			\end{itemize}
			\caption{POVMtest($\mathcal{S}$)}
			\label{POVMtest} 
		\end{algorithm}
		\DecMargin{-1.0em}

		\item \textbf{Key Establishment Phase:}
		\begin{enumerate}
			\item After the DI testing phase for POVM elements, Alice proceeds to this phase with the rest $(|\Gamma_{\text{POVM}}|-\gamma_3 |\Gamma_{\text{POVM}}|)$ shared states. Let us denote this set as $\Gamma_{\text{Key}}$.
			\item For the shared states of the set $\Gamma_{\text{Key}}$, Alice performs 
			KEYgen$(\Gamma_{\text{Key}})$.
			\item After KEYgen$(\Gamma_{\text{Key}})$,
			\begin{itemize}
				\item If Alice gets $M^0_0 (M^0_1)$ for $a_i=0$, she concludes that the original raw key bit for $i$-th instance is $0 (1)$. Whenever Alice gets $M^0_2$, she ignores that outcome.
				\item Similarly, if Alice obtains $M^1_0 (M^1_1)$ for $a_i=1$, she concludes that the original raw key bit for $i$-th instance is $0(1)$. Whenever Alice gets $M^1_2$, she ignores that outcome.
			\end{itemize}
			\item After these key generation, Alice and Bob proceed to private query phase with this $|\Gamma_{\text{Key}}|$ shared states. Note that $|\Gamma_{\text{Key}}| = kN$ for some positive integer $k > 1$ where $N$ is the number of bits in the database and $k$ is exponentially smaller than $N$.
			\item Alice and Bob use the raw key bits obtained from these $kN$ many states for the next phase.
		\end{enumerate}

		\item \textbf{Private Query Phase:}
		\begin{enumerate}
			\item Alice and Bob now share a raw key of length $kN$ bits where Bob knows every bit value and Alice knows partially (and Bob doesn't know the indices of the bits known by Alice).   
			\item Bob randomly announces a permutation which reorder the $kN$ bit string. After the announcement, they both apply the permutation on their raw key bits.
			\item Bob cuts the raw key into $N$ sub strings of length $k$ and tells each bit position to Alice. The bits of every sub string are added bit wise by Alice and Bob to form the 
			final key of length $N$. At the end, if Alice does not know any bit of the final key $F$ (which actually corresponds to Bob), then the protocol has to be executed again.
			\item Now suppose that Alice knows only the $i$-th bit $F_i$ of Bob's final key $F$ and wants to know the $j$-th bit $m_j$ of the database, then she announces a permutation $P_A$ such that after applying the permutation, the $i$-th bit of the final key goes to $j$-th position. 
			Consequently, Bob applies this permutation $P_A$ on the final key $F$ and use it to encrypt the database using one time pad. As, $m_j$ will be encrypted by $F_i$, Alice can correctly 
			recover the intended bit after receiving the encrypted database. 
			\item If Alice knows only one final key bit and wants to know the information about $l$ database bits, she has to announce the permutation for $l$ many times to retrieve the intended bits.
			\item If Alice knows more than one bit of the final key and wants to know more than one bit of the database in a single trial then Alice announces the permutation in such a way that her known key bits encrypt the intended database bits which she wants to retrieve. Thus, Alice can retrieve more than one intended data bits in a single trial. 
		\end{enumerate}
	\end{enumerate} 

	\onecolumngrid

	\begin{tcolorbox}
		
		{\bf Our QPQ Proposal (In Case of Honest Implementation)}
		
		\begin{itemize}
			\item The server Bob and the client Alice share $\K$ EPR pairs among themselves such that the first qubit of every shared EPR state corresponds to Alice and the second qubit corresponds to Bob.
			\item For each of these $\K$ shared EPR pairs, Bob and Alice generate raw key bits in the following way-
			\begin{itemize}
				\item Bob randomly chooses the value of the $i$-th raw key bit $r_i$ (i.e., $r_i \in_R \{0,1\}$).
				\item If $r_i=0$, Bob measures his qubit of the $i$-th shared state in $\{\ket{0},\ket{1}\}$ basis, otherwise (i.e., for $r_i=1$) he measures in $\{\ket{0'},\ket{1'}\}$ basis where $\ket{0'}=(\cos{\theta}\ket{0}+\sin{\theta}\ket{1})$ and $\ket{1'}=(\sin{\theta}\ket{0}-\cos{\theta}\ket{1})$ (here Bob chooses the value of $\theta$ according to the relation as mentioned in equation \ref{relation4}).
				\item Bob declares a classical bit $a_i=0 (a_i=1)$ whenever the measurement outcome at his side for the $i$-th instance is either $\ket{0} (\ket{1})$ or $\ket{0'} (\ket{1'})$.
				\item Whenever Bob declared $a_i = 0$, Alice measures her qubit of the $i$-th EPR pair using the POVM $M^0=\{M^0_0, M^0_1, M^0_2\}$ where
				\begin{eqnarray*}
					M^0_0 &\equiv& \frac {(\sin{\theta}\ket{0} - \cos{\theta}\ket{1}) (\sin{\theta}\bra{0} - \cos{\theta}\bra{1})} {1 + \cos{\theta}}\\
					M^0_1 &\equiv& \frac{1}{1 + \cos{\theta}} \ket{1} \bra{1}\\
					M^0_2 &\equiv& I - M^0_0 - M^0_1
				\end{eqnarray*}
				\item Similarly, whenever Bob declared $a_i = 1$, Alice measures her qubit of the $i$-th EPR pair using the POVM $M^1=\{M^1_0, M^1_1, M^1_2\}$ where
				\begin{eqnarray*}
					M^1_0 &\equiv& 
					\frac{(\cos{\theta}\ket{0} + \sin{\theta}\ket{1})(\cos{\theta}\bra{0} + \sin{\theta}\bra{1})}{1 + \cos{\theta}}\\
					M^1_1 &\equiv& \frac{1}{1 + \cos{\theta}} \ket{0} \bra{0}\\
					M^1_2 &\equiv& I - M^1_0 - M^1_1 
				\end{eqnarray*}
				\item If Alice gets $M^0_0 (M^0_1)$ for $a_i=0$, she concludes that the original raw key bit for $i$-th instance is $0 (1)$. Whenever Alice gets $M^0_2$, her measurement outcome remains uncertain.
				\item Similarly, if Alice obtains $M^1_0 (M^1_1)$ for $a_i=1$, she concludes that the original raw key bit for $i$-th instance is $0(1)$. Whenever Alice gets $M^1_2$, her measurement outcome remains uncertain.
				\item After this raw key generation phase, Bob and Alice perform some postprocessing (permutation and XOR) on their raw key bits to generate the final key of the size equals to the size of the database (Initially, Bob chooses the value of $\theta$ and the number of raw key bits to generate every bit of the final key according to the relation mentioned in equation \ref{relation4}). After post processing, Bob knows all the bits of the final key whereas Alice knows only some of the bits.
				\item Bob then encrypts the entire database using his final key and sends the encrypted database to Alice. 
				\item Finally, Alice decrypts the intended data bits using her partial knowledge about the final key.
			\end{itemize} 
		\end{itemize}
		
	\end{tcolorbox}

		\begin{figure}[ht!]
			\includegraphics[scale=0.4]{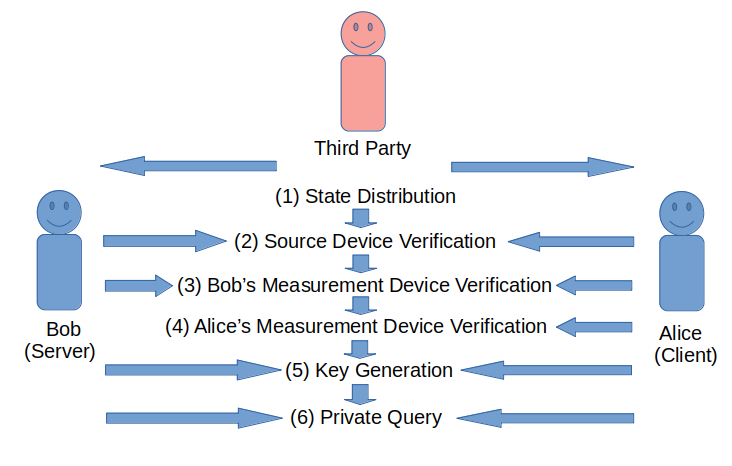}
			\caption{Schematic diagram of our proposed DI-QPQ scheme.
				In step (1), third party distributes EPR pairs between Alice and Bob such that the first qubit of each state corresponds to Alice and the second qubit corresponds to Bob. In step (2), Alice and Bob randomly choose some instances from their shared EPR pairs to certify the states. In step (3), Alice and Bob randomly choose some shared instances (from the rest) to self-test Bob's projective measurement device. In step (4), Alice chooses some shared instances randomly (from the rest) to self-test her POVM device.\newline In step (5), Bob and Alice perform projective and POVM measurement respectively on the rest shared instances to generate shared raw key such that Bob knows all the bits and Alice knows some of the bits.\newline In step (6), Alice performs private query to retrieve the intended data bits.}
			\label{rawsamp}
		\end{figure}

	\twocolumngrid

	\section{Analysis of the protocol} 
	
	In this section we discuss the functionality of our proposed scheme. At first, we discuss the correctness of our protocol in subsection A. Next, we estimate (in subsection B) the values of different parameters involved in our scheme for security purpose. Atlast, we discuss the security related issues of our proposed scheme in subsection C. 
	
	Note that here we present all our analyses considering the asymptotic scenario. In reality, the values of different parameters (derived here) may deviate from their derived value depending on the chosen sample size.

\subsection{Correctness of the Protocol} 
		First we prove the correctness of the protocol.
		\begin{theorem}
			\label{correct}
			In honest Alice and honest Bob scenario, at the end of  $\kep$, Alice can correctly guess (on average) only $(1 - \cos{\theta})kN$ many bits of the entire raw key $R$. 
		\end{theorem}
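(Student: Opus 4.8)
The plan is to track a single shared EPR pair through the key-establishment phase and compute the probability that Alice's POVM yields a conclusive outcome ($D_0$, $D_1$, $D_0'$ or $D_1'$), then invoke linearity of expectation over the $kN$ pairs. First I would fix the shared state $\ket{\phi}_{\B\A} = \tfrac{1}{\sqrt2}(\ket{00} + \ket{11})$ and condition on Bob's random bit $R_i$. If $R_i = 0$ Bob measures in $\{\ket0,\ket1\}$ and gets $\ket0$ (hence announces $a_i=0$) or $\ket1$ (announces $a_i=1$), each with probability $\tfrac12$; the post-measurement state on Alice's side is $\ket0$ or $\ket1$ correspondingly. If $R_i = 1$ Bob measures in $\{\ket{0'},\ket{1'}\}$ with $\ket{0'} = \cos\theta\ket0 + \sin\theta\ket1$, $\ket{1'} = \sin\theta\ket0 - \cos\theta\ket1$, collapsing Alice's qubit to $\ket{0'}$ (then $a_i = 0$) or $\ket{1'}$ (then $a_i=1$), again each with probability $\tfrac12$.

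Next I would compute, in each of the two branches $a_i = 0$ and $a_i = 1$, the probability that Alice's outcome is conclusive. Take $a_i = 0$: Alice's state is $\ket0$ (if $R_i=0$) or $\ket{0'}$ (if $R_i=1$), each with conditional probability $\tfrac12$, and she applies $\{D_0, D_1, D_2\}$. One checks $\bra0 D_0 \ket0 = 0$ and $\bra{0'} D_1 \ket{0'} = 0$, so $D_0$ only fires on the $\ket{0'}$-branch and $D_1$ only on the $\ket0$-branch — exactly matching the decoding rule (conclude ``basis $\{\ket{0'},\ket{1'}\}$, raw bit $1$'' on $D_0$, and ``basis $\{\ket0,\ket1\}$, raw bit $0$'' on $D_1$), so there are no decoding errors. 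The conclusive probability is then $\tfrac12\bra0 D_1\ket0 + \tfrac12\bra{0'} D_0\ket{0'}$; the routine computation gives $\bra0 D_1\ket0 = \tfrac{1}{1+\cos\theta}$ and $\bra{0'} D_0\ket{0'} = \tfrac{(\sin^2\theta)^2 / \dots}{}$ — more cleanly, $\ket{0'}$ is orthogonal to $\ket{1'} = \sin\theta\ket0-\cos\theta\ket1$ so $\bra{0'} D_0 \ket{0'} = \tfrac{|\langle 0'|(\sin\theta\,0 - \cos\theta\,1)\rangle|^2}{1+\cos\theta}$, and since $\langle 0' | (\sin\theta\ket0 - \cos\theta\ket1)\rangle = \sin\theta\cos\theta - \cos\theta\sin\theta = 0$... so I must be careful: the vector defining $D_0$ is $\sin\theta\ket0 - \cos\theta\ket1$, which is exactly $\ket{1'}$, hence $D_0 \propto \ket{1'}\bra{1'}$ and $\bra{0'}D_0\ket{0'} = 0$ while $\bra{1'} D_0 \ket{1'} = \tfrac{1}{1+\cos\theta}$. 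I would redo the branch bookkeeping accordingly: it is Bob's \emph{outcome} $\ket0$ vs $\ket{1'}$ that determines which of $D_0,D_1$ can fire, and in both relevant sub-cases the firing probability is $\tfrac{1}{1+\cos\theta}$ times the squared overlap, which works out so that the overall conclusive probability per pair equals $1 - \cos\theta$. The $a_i = 1$ branch is symmetric by the same argument with $\{D_0', D_1', D_2'\}$ and the roles of $\ket0 \leftrightarrow \ket1$, $\ket{0'}\leftrightarrow\ket{1'}$ swapped, giving the same $1-\cos\theta$.

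Finally I would combine: unconditionally on $R_i$ and on $a_i$, each raw key bit is guessed conclusively (and correctly) with probability exactly $1 - \cos\theta$, and when inconclusive Alice's best guess is no better than chance and carries no reliable information, so the number of raw key bits Alice ``could correctly guess'' has expectation $(1-\cos\theta)kN$; a concentration remark (or simply reading the statement as the expected/asymptotic fraction, consistent with the noise-tolerance $\eta$ appearing in the DI-testing step) finishes it. The main obstacle I anticipate is purely the careful case analysis tying each POVM element to the correct post-measurement vector and verifying the \emph{zero-error} property of the conclusive outcomes — i.e.\ that $D_0, D_1$ (resp.\ $D_0', D_1'$) are supported exactly on the states that should \emph{not} trigger them — since the $1-\cos\theta$ figure only comes out clean once those orthogonality relations are nailed down; the arithmetic afterward is short.
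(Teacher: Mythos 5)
Your proposal follows essentially the same route as the paper's proof: condition on Bob's bit $R_i$ and his announcement $a_i$, compute $\Pr(D_j\mid\text{state})$ for the two possible collapsed states on Alice's side, exploit the unambiguous-discrimination (zero-error) structure of the POVM, obtain a per-bit conclusive probability of $1-\cos\theta$, and multiply by $kN$; the final figure is correct. However, as written, the middle of your argument still carries an inverted assignment that you only half-retract. Since $\sin\theta\ket{0}-\cos\theta\ket{1}=\ket{1'}$, the correct relations are $D_0=\frac{\ket{1'}\bra{1'}}{1+\cos\theta}$ and $D_1=\frac{\ket{1}\bra{1}}{1+\cos\theta}$, so in the $a_i=0$ branch $D_0$ can fire \emph{only} on $\ket{0}$ (with probability $\frac{\sin^2\theta}{1+\cos\theta}=1-\cos\theta$) and $D_1$ can fire \emph{only} on $\ket{0'}$ (same probability). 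Hence the decoding rule must be ``$D_0\Rightarrow R_i=0$, $D_1\Rightarrow R_i=1$'' --- the opposite of what you first wrote, and exactly what the protocol prescribes; with your stated rule Alice's conclusive guesses would all be \emph{wrong}, which matters because the theorem counts correct guesses. Your remark that ``it is Bob's outcome $\ket{0}$ vs $\ket{1'}$ that determines which of $D_0,D_1$ can fire'' is also off: for $a_i=0$ Bob's possible outcomes are $\ket{0}$ and $\ket{0'}$, and $\ket{1'}$ is merely the support of $D_0$. Once those two points are repaired, the computation $\tfrac12(1-\cos\theta)+\tfrac12(1-\cos\theta)=1-\cos\theta$ per announced bit, identical in the $a_i=0$ and $a_i=1$ branches, yields the theorem exactly as in the paper.
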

		\begin{proof} 
			
			After the key establishment phase, Bob and Alice share $kN$ raw key bits. These raw key bits were generated from $kN$ copies of maximally entangled states of the form
			\begin{eqnarray*}
				&&\frac{1}{\sqrt{2}}(\ket{0}_{\A}\ket{0}_{\B}+\ket{1}_{\A}\ket{1}_{\B})\\
				&=&\frac{1}{\sqrt{2}}(\ket{0'}_{\A}\ket{0'}_{\B}+\ket{1'}_{\A}\ket{1'}_{\B}),
			\end{eqnarray*}
			where, $\ket{0'}= (\cos{\theta}\ket{0}+\sin{\theta}\ket{1})$ and $\ket{1'}= (\sin{\theta}\ket{0}-\cos{\theta}\ket{1})$. Here $\theta$ may vary from $0$ to $\frac{\pi}{2}$.
			
			The generation of such $kN$ raw key bits can be redefined as follows.
			
			Bob prepares a random bit stream $R = r_1 \ldots r_{kN}$ of length $kN$. If $r_i = 0$, Bob measures his qubits in $\{\ket{0}, \ket{1}\}$ basis. 
			Whereas, if $r_i=1$, Bob measures his qubit in $\{\ket{0'}, \ket{1'}\}$ basis. After each measurement Bob announces a bit $a_i\in\{0,1\}$. 
			If he gets $\ket{0}$ or $\ket{0'}$, he announces $a_i=0$. If he gets $\ket{1}$ or $\ket{1'}$, he announces $a_i=1$. Now, Alice's job is to guess the value of each $r_i$.
			
			Thus, whenever Bob declares $a_i = 0$, Alice can understand that Bob gets either $\ket{0}$ or $\ket{0'}$ and the shared qubit of her side also collapses to $\ket{0}$ or $\ket{0'}$ 
			respectively. However, to obtain the value of the raw key bit, Alice has to distinguish these two states with certainty. As, $\ket{0}$ and $\ket{0'}$ are 
			non-orthogonal states (when $\theta\neq\frac{\pi}{2}$), Alice cannot distinguish these two states with certainty for all the instances. 
			
			According to the strategy mentioned in the protocol, whenever Bob declares $a_i = 0$, Alice chooses the POVM $\{M^0_0, M^0_1, M^0_2\}$. After measurement, if Alice receives the outcome 
			$M^0_0$, she concludes that Bob's measurement outcome was $\ket{0}$. In such case, Alice concludes that $r_i=0$. If Alice receives the outcome $M^0_1$, 
			she concludes that Bob's measurement outcome was $\ket{0'}$. In such a case, Alice concludes that $r_i=1$. However, if the measurement outcome is $M^0_2$, then Alice remains 
			uncertain about the value of the raw key bit. Alice follows the similar methodology for $a_i=1$.
			
			Now, we calculate the success probability of Alice to guess each $r_i$ correctly. Let us assume that $\Pr(M^{a_i}_j|\ket{\phi^{a_i}_i})$ denotes the corresponding success probability of 
			getting the result $M^{a_i}_j$ when the given state is $\ket{\phi^{a_i}_i}$ i.e., $$\Pr(M^{a_i}_j|\ket{\phi^{a_i}_i}) = \bra{\phi^{a_i}_i}M^{a_i}_j\ket{\phi^{a_i}_i}.$$ We now calculate the corresponding success 
			probabilities of getting different results for the states $\ket{0}$ and $\ket{0'}$.
			
			For $\ket{0}$, the success probabilities will be
			\begin{eqnarray*}
				\Pr(M^0_0|\ket{0}) &=&  \bra{0}M^0_0\ket{0}\\
				& =& (1 - \cos{\theta})\\
				\Pr(M^0_1|\ket{0}) &=&  \bra{0}M^0_1\ket{0} \\
				&=& 0\\
				\Pr(M^0_2|\ket{0}) &=&  \bra{0}M^0_2\ket{0} \\
				&=& \cos{\theta}
			\end{eqnarray*}
			Similarly, for the state $\ket{0'}$, the success probabilities will be
			\begin{eqnarray*}
				\Pr(M^0_0|\ket{0'}) &=&  \bra{0'}M^0_0\ket{0'}\\
				&=& 0\\
				\Pr(M^0_1|\ket{0'}) &=&  \bra{0'}M^0_1\ket{0'}\\
				& =& (1 - \cos{\theta})\\
				\Pr(M^0_2|\ket{0'}) &=&  \bra{0'}M^0_2\ket{0'}\\
				&=& \cos{\theta}
			\end{eqnarray*}
			Similarly, whenever Bob declares $a_i = 1$, Alice chooses the POVM $\{M^1_0, M^1_1, M^1_2\}$. In a similar way, we can calculate the success probability here. 
			We formalize all the conditional probabilities in the following table.
			
			{\small
				\begin{center}
					\begin{tabular}{|c|c|c|c|c|c|c|c|}
						\hline
						& \multicolumn{4}{|c|}{Cond. Probability of Alice}\\
						\cline{2-5}
						a &\theadfont\diagbox[width=5 em]{Bob}{Alice}& A=$M^0_0 / M^1_0$ & A=$M^0_1 / M^1_1$ & A=$M^0_2 / M^1_2$ \\
						\hline
						0 & $ B=\ket{0} $ & $1-\cos{\theta}$ & $0$ & $\cos{\theta}$\\
						\hline
						0 & $B=\ket{0'}$ & $0$ & $1-\cos{\theta}$ &$\cos{\theta}$\\
						\hline
						1 & $B=\ket{1}$ & $1-\cos{\theta}$ & $0$ &$\cos{\theta}$\\
						\hline
						1 & $B=\ket{1'}$ & $0$ & $1-\cos{\theta}$ &$\cos{\theta}$\\
						\hline
					\end{tabular}
				\end{center}}
				According to the protocol, if $a_i=0$ and 
				Alice gets $M^0_0(M^0_1)$, she outputs $r_{\A_i}=0 (1)$. When $a_i = 1$ and she gets $M^1_0(M^1_1)$, she outputs $r_{\A_i}=0 (1)$. 
				Thus, the success probability of Alice to guess the $i$-th raw key bit $r_i$ of Bob can be written as 
				\begin{eqnarray}
				\label{sucprob}
				& &\Pr(r_{\A_i} = r_i) \nonumber\\
				&=&\Pr(r_{\A_i}=0,r_i=0) + \Pr(r_{\A_i}=1,r_i=1) \nonumber\\
				&=& (1 - \cos{\theta}).\nonumber
				\end{eqnarray}
				So, according to the proposed scheme, the overall success probability of Alice in guessing a raw key bit is equal to $(1 - \cos{\theta})$. This implies that at the end of the key establishment phase, Alice can guess (on average) around $(1 -\cos{\theta})kN$ many raw key bits with certainty. 
			\end{proof}

\subsection{Parameter estimation for private query phase}
	
	Here, we estimate the values of different parameters such that the protocol preserves both user privacy and data privacy. In this scheme, after the \kep, Bob has $kN$ many raw key bits such that Bob knows all these bits and Alice knows some of these bits. In \pqp, both Alice and Bob cut their raw keys in some particular positions to prepare $N$ sub strings of length $k$ such that $k=\frac{|\Gamma_{\text{Key}}|}{N}$ where $|\Gamma_{\text{Key}}|$ denotes the total number of raw key bits at the \pqp and $N$ denotes the number of database bits. Alice and Bob then perform bit wise XOR among the bits of each sub string to get the $N$ bit final key $F$. Here, $r_i (1 \leq i \leq kN)$ denotes the i-th raw key of Bob and $f_i (1 \leq i \leq N)$ denotes the i-th final key of Bob. Based on the procedure mentioned in \pqp for generating final key bits, the relation between $r_i$ and $f_i$ can be written as,
	$$f_i = \oplus_{j=(i-1)k + 1}^{ik} r_j ~~~~(1 \leq i \leq N)$$ 
	Where $\oplus$ denotes addition modulo 2.
	
	It will be clearer by a toy example. Consider $N = 10$ and $k = 2$. Let us assume that the raw key at Bob's side is,
	$$01~ 10~ 01~ 00~ 10~ 01~ 01~ 11~ 00~ 11$$
	and after the \kep, the raw key at Alice's side is,
	$$?1~ ??~ 0?~ ??~ ??~ 01~ ?1~ ??~ 0?~ ?1$$
	i.e., Alice knows the values of 2nd, 5th, 11th, 12th, 14th, 17th and 20th key bits of the original raw key ($?$ stands for inconclusive key bit i.e., the positions where Alice can't guess the key bits with certainty).
	
	Now, after the modulo operation on the raw key, the final key at Bob's side will be,
	$$1~ 1~ 1~ 0~ 1~ 1~ 1~ 0~ 0~ 0$$
	and the final key at Alice's side will be,
	$$?~ ?~ ?~ ?~ ?~ 1~ ?~ ?~ ?~ ?$$ 
	Thus, the number of known key bits by Alice is reduced from 7 to 1. 
	The significance of such modulo operation is to enhance the security of the protocol. This is similar to the privacy amplification in a QKD protocol.\\
	
	\underline{\emph{Estimation of the parameter $\theta$ for security purpose :}}\\
	
			
			In this proposal, the client Alice can guess a raw key bit with probability around $\left(1 - \cos{\theta}\right)$ and both the server Bob and the client Alice share $kN$ many raw key bits. So, if $n_r$ denotes the number of raw key bits that Alice knows after the \kep~of our scheme then we can write that the expected value of $n_r$ will be,
			
			\begin{align}
			E[n_r] &= \left(1 - \cos{\theta}\right)kN
			\end{align}

			Alice and Bob then XOR $k$ number of raw key bits to generate every bit of the final key. So, to conclusively guess a final key bit, Alice has to guess all the $k$ raw key bits correctly corresponding to that final key bit. This implies that Alice can correctly guess a final key bit with probability around $P_f = \left(1 - \cos{\theta}\right)^k$. 
			
			Let us assume that the variable $n_f$ denotes the number of final key bits known by Alice. It is easy to check that $n_f$ is a binomially distributed random variable with a total of $N$ number of final key bits and the probability of getting one final key bit successfully is $P_f = \left(1 - \cos{\theta}\right)^k$.
			
			So, the expected number of final key bits that Alice knows after the \kep~is,
			
			
			
			\begin{align}
			\label{final}
			E[n_f] &= P_f N \approx \left(1 - \cos{\theta}\right)^k N   
			\end{align}
			
			

			In our DI proposal, dishonest Alice has to measure correctly (using the specified POVM) to successfully pass the DI testing phase. Moreover, it is known that $(1 - \cos{\theta})$ is the optimal probability in distinguishing two non orthogonal states~\cite{iva87}. This implies that in our proposed scheme, whenever the protocol does not abort, dishonest Alice can guess the $i$-th raw key bit $R_i$ with probability atmost $\left(1 - \cos{\theta}\right)$ i.e.,
			
			\begin{align}
			\label{rawupper}
			\Pr[R_{\A^*_i} = R_i] &\leq \left(1 - \cos{\theta}\right)
			\end{align} 
			
			where $\A^*_i$ denotes dishonest Alice's subsystem corresponding to the $i$-th shared state. 
			
			As after Bob's measurement, the states at Alice's side are independent and the measurement devices at dishonest Alice's side are also independent and memoryless, the guessing probability of dishonest Alice for the $i$-th final key bit $F_i$ will be upper bounded by $\left(1 - \cos{\theta}\right)^k$ i.e., 
			
			\begin{align}
			\label{finalprob}
			\Pr[F_{\A^*_i} = F_i] = P_f &\leq \left(1 - \cos{\theta}\right)^k
			\end{align} 
			
			From equation \ref{final} and equation \ref{finalprob}, we can conclude that the expected number of final key bits that dishonest Alice can guess correctly whenever Bob does not abort the protocol will be upper bounded by,
			
			\begin{align}
			\label{totalprob}
			E[F_{\A^*}] & \leq \left(1 - \cos{\theta}\right)^k N 
			\end{align}
		
	According to our proposal, the size of the final key is equal to the size of the database and the database is encrypted by performing bit wise XOR with the final key. So, the correct guessing of a final key bit implies the correct guessing of the corresponding database bit. This implies that whenever the protocol does not abort, the expected number of data bits that dishonest Alice can guess correctly in a single query is also upper bounded by $\left(1 - \cos{\theta}\right)^k N$ i.e., 

\begin{equation}
	\noindent
	\label{finaldata}
	E[D_{\A^*}] \leq \left(1 - \cos{\theta}\right)^k N
\end{equation}

As in our scheme, the server Bob wants the client Alice to know atleast a single final key bit (so that the protocol does not abort) and always less than two final key bits, the expected number of final key bits known by Alice must satisfy the condition,

			\begin{align*}
			1 &\leq E[n_f] < 2 
			\end{align*}
			
			This implies that,
			
			\begin{align}
			\label{relation4}
			1 &\leq \left(1 - \cos{\theta}\right)^k N  < 2 \nonumber\\
			\frac{1}{N} &\leq \left(1 - \cos{\theta}\right)^k  < \frac{2}{N}
			\end{align} 
			
	
	From these results, one can conclude the following,
	
	\begin{corollary}
		\label{lower_pc}
		
		If the server Bob wants the client Alice to know at least one final key bit but less than two final key bits, then Bob needs to choose the parameter $k$ and the value of $\theta$ such that,
		
		\begin{equation}
		\noindent
		\frac{1}{N} \leq \left(1 - \cos{\theta}\right)^k < \frac{2}{N} \nonumber
		\end{equation}
	\end{corollary}


Now from the correctness condition, we derive an upper bound on the value of the parameter $P_a$ (mentioned in definition \ref{prot_rob}) and a lower bound on the value of the parameter $P_c$ (mentioned in definition \ref{prot_corr}) for our scheme.\\


	
\underline{\emph{ Estimation of the security parameters $P_a$ and $P_c$ :}}\\

Atfirst, we calculate the probability that the protocol does not abort in honest scenario. Then from equation \ref{relation4}, using the derived bound on the value of $\left(1 - \cos{\theta}\right)^k$, we can obtain a lower bound on the value of $P_c$ using the Chernoff-Hoeffding inequality~\cite{WH} (here we estimate the value of $P_c$ using Chernoff-Hoeffding inequality because we assume that dishonest Alice also measures in i.i.d fashion). 

	
	In our scheme, the probability that Alice can successfully guess a final key bit is (approximately) equal to $\left(1 - \cos{\theta}\right)^k$.
	
	So, the probabilty that Alice can not guess a final key bit is equal to $\left[1 - \left(1 - \cos{\theta}\right)^k\right]$.
	
	This implies that the probability that Alice knows none of the $N$ final key bits is equal to 
	
	\begin{align}
	\left[1 - \left(1 - \cos{\theta}\right)^k\right]^N &\approx e^{-\left(1 - \cos{\theta}\right)^k N} 
	\end{align}
	
	i.e., for our proposed scheme, we get the following bound on the value of the parameter $P_a$.
	
	\begin{align}
	\label{pa1}
	P_a &\leq  e^{-\left(1 - \cos{\theta}\right)^k N}
	\end{align}
	
	From equation \ref{relation4}, we get that $\frac{1}{N} \leq \left(1 - \cos{\theta}\right)^k < \frac{2}{N}$. If we consider that Bob chooses the values of $\theta$ and $k$ such that $\left(1 - \cos{\theta}\right)^k = \frac{1}{N}$ then replacing this value in equation \ref{pa1}, we can get,
	
	\begin{equation}
	\noindent
	\label{pa2}
	\boxed{P_a \leq e^{-1}}
	\end{equation} 
	
	This implies that the value of $P_a$ is small for our proposed scheme. Similarly, the probability that the protocol does not abort in honest scenario (i.e., Alice knows atleast one final key bit) is equal to
	
	\begin{align}
	\label{protsucc}
	\Pr(\text{the protocol doesn't abort}) &\geq \left[1 - e^{-1}\right]
	\end{align}

	So, for our proposed scheme, the probability that the protocol does not abort is high. Now, we recall the Chernoff-Hoeffding~\cite{WH} inequality.
	
	\begin{proposition}
		\label{propchernoff}
		(Chernoff-Hoeffding Inequality) Let $X=\frac{1}{m}\sum_{1 \leq i \leq m}{X_i}$ be the average of $m$ independent random variables $X_1, X_2, \cdots, X_m$ with values $(0,1)$, and let $\mathbb{E}[X]=\frac{1}{m}\sum_{1\leq i \leq m}{\mathbb{E}[X_i]}$ be the expected value of $X$. Then for any $\delta_{CH}>0$, we have $\Pr\left[|X-\mathbb{E}[X]| \geq \delta_{CH} \right] \leq \exp(-2\delta_{CH}^2 m)$.
	\end{proposition} 
	
	After the \kep~of our scheme, we consider $X_i=1$ whenever Alice knows the value of the $i$-th final key bit (i.e., for all the raw key bits corresponding to the $i$-th final key bit, Alice gets a conclusive POVM outcome i.e., either $M^0_0$ or $M^0_1$ for $a_i=0$ and either $M^1_0$ or $M^1_1$ for $a_i=1$) and $X_i=0$ otherwise. As there are total $N$ number of final key bits, we consider the value of the random variable $X$ as $X = \sum_{i=1}^{N} X_i$. 
	
	From the correctness of our scheme, we can say that whenever the protocol does not abort, the expected number of final key bits that Alice should know after the \kep~is $E[X]=\left(1 - \cos{\theta}\right)^k N$ and there are total $m=N$ number of final key bits. Now, we want that the value of $X$ lie within the error margin $\delta_{CH}=\epsilon \left(1 - \cos{\theta}\right)^k N$ (where the value of $\epsilon$ is very small and in practice, this value depends on the number of samples chosen for a particular testing phase. One may refer to Appendix A to check how the specific value of $\epsilon$ can be chosen for a given number of samples) from the expected value. Here, we can calculate the corresponding probability using the Chernoff-Hoeffding inequality as the final key bits at dishonest Alice's side are all independent (because the states collapsed at Alice's side are all independent and the measurement devices at dishonest Alice's end are independent and memoryless). For this proposed scheme, the value of the random variable $X$ and also the expected value $E[X]$ is calculated considering the scenario that the protocol does not abort. So, from the expression of Chernoff-Hoeffding bound in proposition \ref{propchernoff}, we can write that,
	
	\begin{align}
	\noindent
	\label{expx}
	\nonumber
	& \Pr\left[|X-\mathbb{E}[X]| < \delta_{CH}  \wedge \text{protocol doesn't abort} \right] \\
	& \geq 1-\exp(-2\delta_{CH}^2 m) 
	\end{align}
	
	After the \kep, Alice and Bob share $N$ number of final key bits and we want that the number of final key bits known by Alice lie within the interval $[p-\epsilon p, p+\epsilon p]$ where $p=\left(1 - \cos{\theta}\right)^k N$ and the deviation allowed here is $\delta_{CH} = \epsilon \left(1 - \cos{\theta}\right)^k N$. From the expression \ref{expx}, replacing the value of $\delta_{CH}$ and $m$, we get that, 
	
	\begin{equation}
	\noindent
	\label{expxf}
	\boxed{
		\begin{aligned}
		& \Pr\left[|X-\mathbb{E}[X]| < \delta_{CH} \wedge \text{protocol doesn't abort} \right] \\
		& \geq 1-\exp(-2\delta_{CH}^2 N) \\
		& \text{where}~~\delta_{CH} = \epsilon \left(1 - \cos{\theta}\right)^k N 
		\end{aligned}}
	\end{equation}
	
	We already get the following bound on the value of $\left(1 - \cos{\theta}\right)^k$ from equation \ref{relation4} for our proposed scheme.
	
	\begin{equation*}
	\noindent
	\frac{1}{N} \leq \left(1 - \cos{\theta}\right)^k < \frac{2}{N}
	\end{equation*}
	
	So, if we consider that Bob chooses the values of $\theta$ and $k$ such that $\left(1 - \cos{\theta}\right)^k=\frac{1}{N}$ then replacing this value in equation \ref{expxf}, we can get,

	\begin{equation}
	\noindent
	\label{expxf2}
	\boxed{
		\begin{aligned}
		& \Pr\left[|X-\mathbb{E}[X]| < \epsilon \wedge \text{protocol doesn't abort} \right] \\ 
	& \geq 1-\exp(-2\epsilon^2 N)
	\end{aligned}}
	\end{equation}
	
	According to our proposal, the correct guessing of a final key bit implies the correct guessing of the corresponding data bit. So, from definition \ref{prot_corr}, we can say that in our proposed scheme, for honest Alice and honest Bob, the value of the parameter $P_c$ is lower bounded by,
	
	\begin{equation}
	\noindent
	\label{boundpc}
	\boxed{P_c \geq [1-\exp(-2\epsilon^2 N)]}
	\end{equation}
	
	In practice, this probability is high as the value of $N$ is very large. This implies that in honest scenario of our proposed scheme, the probability that Alice knows the expected number of final key bits (with atmost $\epsilon$ deviation from the expected number) and the protocol does not abort is high. 
	
	
	It is already mentioned that Bob chooses the values of $\theta$ and $k$ such that Alice knows atleast one and less than two final key bits. From the relation mentioned in equation \ref{relation4}, we can get the following bound on the value of $\delta_{CH}$. 
	
	\begin{equation}
	\noindent
	\label{deltach}
	\epsilon  \leq \delta_{CH} < 2 \epsilon 
	\end{equation}
	
	From this relation, one can easily argue that the upper bound on the value of $\epsilon$ can be derived from the inequality $2 \epsilon \leq 1$ and the corresponding upper bound will be $\epsilon \leq \frac{1}{2}$.
	


	To evaluate the performance, here we consider our scheme as $1$ out of $2$ probabilistic oblivious transfer (i.e., $N=2$ and $k=1$). From equation \ref{totalprob} and equation \ref{relation4}, we can argue that for the $1$ out of $2$ probabilistic oblivious transfer variant of our scheme, if Bob chooses the value of $\theta$ such that $(1-\cos{\theta}) = \frac{1}{2}$ (i.e., the minimum value for $N=2$ and $k=1$), then the expected number of final key bits (or data bits) that Alice can retrieve in a single query is $\left(\frac{1}{2} \times 2\right) = 1$. From equation \ref{protsucc}, we can say that if we consider the $1$ out of $2$ probabilistic oblivious transfer variant of our scheme then in honest scenario,
	
	\begin{equation}
	\label{12protsucc}
	\Pr(\text{protocol doesn't abort}) \geq (1 - e^{-1}) \approx 0.632
	\end{equation}
	
	Similarly, from equation \ref{boundpc}, one can conclude that for the variant $1$ out of $2$ probabilistic oblivious transfer, if we consider $\epsilon = \frac{1}{2}$ then the probability that Alice gets the expected number of final key bits and the protocol doesn't abort is lower bounded by 
	
	\begin{equation}
	\label{12pcvalue}
	P_c \geq (1 - e^{-1}) \approx 0.632
	\end{equation}
	


	\subsection{Security of the Protocol}
	
	In this section, we discuss the security related issues of our proposed scheme. Based on the results in Corollary \ref{cor3}, Theorem \ref{cor5}, Theorem \ref{thm:povmtest0} and Theorem \ref{thm:povmtest1}, we conclude about the DI security of our proposed scheme. All these results guarantee that either the protocol aborts with high probability in the asymptotic limit or the devices involved in the scheme achieve the intended values of the parameters $\mathcal{C}$, $\beta$, $\Omega^0$ and $\Omega^1$. Later on, we move towards deriving upper bounds on the information gained by dishonest Alice and dishonest Bob. In Lemma \ref{min_ent2}, we show that dishonest Alice cannot get (on average) more than $(1- \cos \theta)$ fraction of bits of the entire raw key. Lemma \ref{min_ent3} together with corollary \ref{upper_delta} show that dishonest Bob can guess only $\frac{l}{N}$ fraction of indices from Alice's query index set.

\subsubsection{\textbf{Device independent security}} 

~\\

In our proposed scheme, the device independent (DI) testing has been done in three phases. The first two DI testing are done in {\em source device verification phase} and {\em DI testing for Bob's measurement device}. The third DI testing occurs in {\em DI testing phase for Alice's POVM elements}. 


In source device verification phase, at first {\em Local CHSH game} has been performed by each of Alice and Bob independently (as mentioned in LocalCHSHtest) at their end for some randomly chosen samples. In this phase, both Alice and Bob test individually whether the states provided by the third party are EPR pairs. Bob and Alice choose the samples randomly for which they want to perform LocalCHSHtest and share this information publicly to get the corresponding qubits from the other party and also to identify all the samples for which they perform LocalCHSHtest. 


As QPQ is a distrustful scheme, both the parties may not behave honestly in every phase of the protocol. For this reason, here we assume that the party who acts honestly for a particular phase, will take the responsibilities of the referee as well as the two parties in the CHSH game to ensure the random and independent choice of inputs for the devices involved in the LocalCHSHtest at his end. This guarantees that in LocalCHSHtest, the inputs to the devices are random and independent.

So, from the rigidity of CHSH game~\cite[Lemma 4.2]{ruv13}, one can conclude the following.

\begin{corollary}[DI testing of shared states]
	\label{cor3}
	In the LocalCHSHtest of \evp, either the devices achieve $\mathcal{C} = \cos^2 \frac{\pi}{8}$ for both Alice and Bob (i.e., the states provided by the third party are EPR pairs), or the protocol aborts with high probability in the asymptotic limit. 
\end{corollary} 

In the next phase, Bob checks the functionality of his measurement device. At first Bob chooses the inputs randomly for his device and measures his particles accordingly. After that, first Bob and then Alice choose samples independently from the rest shared states and discuss publicly about those chosen instances. Then for the chosen samples, Bob generates the input bits randomly for Alice and announce the bits publicly so that Alice can measure her particles according to these bit values. 

Here we assume that Bob will act honestly in this phase to check the functionality of his devices because from the result in Lemma \ref{min_ent3}, it is clear that if dishonest Bob wants to guess Alice's query indices with more certain probability then he should allow Alice to know more data bits in a single query which violates our assumption 4 that none of the parties reveal additional information from his side to get more information from the other party.

After the measurements, Bob and Alice discuss all their inputs and outputs publicly and calculate the value of the parameter $\beta$ as mentioned in the OBStest. From this result, one can conclude the following.

\begin{theorem}[DI testing of Bob's measurement devices]
	\label{cor5}
	In OBStest, either Bob's measurement devices achieve $\beta = \frac{1}{\sqrt{2}|(\cos{\theta}-\sin{\theta})|}$ (i.e., his devices measure correctly in $\{\ket{0}, \ket{1}\}$ and $\{\ket{0'}, \ket{1'}\}$ basis where $\ket{0'} = (\cos \theta\ket{0} + \sin \theta\ket{1})$, $\ket{1'} = (\sin \theta\ket{0} - \cos \theta\ket{1})$), or the protocol aborts with high probability in the asymptotic limit.
\end{theorem} 

The detail proof of this theorem is given in Appendix B which follows exactly the same approach that is mentioned in~\cite{kan17} for certifying non-maximally incompatible observables. 

This implies that the LocalCHSHtest certifies the states provided by the third party and OBStest certifies the projective measurement device (for the specific measurement bases used in OBStest) of Bob. As Bob declares $a_i$ values for all the shared instances before OBStest and Alice randomly chooses some of those instances for OBStest, the successful completion of OBStest also implies that for all the remaining instances (i.e., for the instances which are not chosen for OBStest), Alice's state must be either $\ket{0}\bra{0}$ or $\ket{0'}\bra{0'}$ whenever Bob declares $a_i=0$ and must be either $\ket{1}\bra{1}$ or $\ket{1'}\bra{1'}$ whenever Bob declares $a_i=1$.

The third DI testing is done in {\it DI testing phase for Alice's POVM elements}. According to the protocol, Alice and Bob lead to this phase whenever they have successfully passed the first two DI testing phases. So, Alice and Bob are in this phase implies that both Bob's projective measurement device and their shared states are noiseless. Now, this testing phase basically guarantees the functionality of Alice's POVM device. Note that in this phase, Bob need not test his measurement device again. During OBStest, his devices are tested already. However, Alice has to shift to a new measurement device for better conclusiveness. Device independent security demands that Alice's new device should be tested further for certification. In this phase, Alice measures the chosen states using the device $M^0 =\{M^0_0, M^0_1, M^0_2\}$ or $M^1 =\{M^1_0, M^1_1, M^1_2\}$ based on the declared $a_i$ values for each of those instances. From the measurement outcomes, she computes the quantity $\Omega^0$ and $\Omega^1$ (as defined in $POVMtest()$) and checks whether each of these values equal to $\frac{2\sin^2{\theta}}{(1+\cos{\theta})}$. Theorem \ref{thm:povmtest0} (Theorem \ref{thm:povmtest1}) shows that, for the instances where $a_i=0$ ($a_i=1$), if Alice observes that $\Omega^0 = \frac{2\sin^2{\theta}}{(1+\cos{\theta})}$  $\left(\Omega^1 = \frac{2\sin^2{\theta}}{(1+\cos{\theta})}\right)$ then it guarantees that the measurement devices are the desired POVM $\{D^0_0,D^0_1,D^0_2\}$ i.e., $M^0=D^0$ ($\{D^1_0,D^1_1,D^1_2\}$ i.e., $M^1=D^1$). 

\begin{theorem}[DI testing of Alice's measurement device $M_0$]
	\label{thm:povmtest0}
	In POVMtest, for the instances where Bob declares $a_i=0$, either the protocol aborts with high probability in the asymptotic limit or Alice's measurement devices achieve $\Omega^0 = \frac{2 \sin^2 \theta}{1+\cos \theta}$ i.e., the devices are of the following form (up to a local unitary),
	
	\begin{align}
	M^0_0 & = \frac{1}{(1+\cos \theta)} (|1'\rangle \langle1'|)\\
	M^0_1 & = \frac{1}{(1+\cos \theta)} (|1\rangle \langle1|)\\
	M^0_2 & = \id - M^0_0- M^0_1,
	\end{align}
	
	where $|1'\rangle = \sin \theta |0\rangle - \cos \theta |1\rangle$.
\end{theorem}

\begin{theorem}[DI testing of Alice's measurement device $M_1$]
	\label{thm:povmtest1}
	In POVMtest, for the instances where Bob declares $a_i=1$, either the protocol aborts with high probability in the asymptotic limit or Alice's measurement devices achieve $\Omega^1 = \frac{2 \sin^2 \theta}{1+\cos \theta}$, i.e., the devices are of the following form (up to a local unitary),
	
	\begin{align}
	M^1_0 & = \frac{1}{(1+\cos \theta)} (|0'\rangle \langle0'|)\\
	M^1_1 & = \frac{1}{(1+\cos \theta)} (|0\rangle \langle0|)\\
	M^1_2 & = \id - M^1_0- M^1_1,
	\end{align}
	
	where $|0'\rangle = \cos \theta |0\rangle + \sin \theta |1\rangle$.
\end{theorem}

The proofs of these two theorems are deferred to the subsection entitled {\it DI testing of POVM elements} of the Appendix C. In the proof, we restate the functionality of the POVM devices in the form of a two party game (namely POVMgame), consider a general form for the single qubit three outcome POVM $\{M^0_0,M^0_1, M^0_2\}$ ($\{M^1_0,M^1_1, M^1_2\}$) and show that if the input states are chosen randomly between $\ket{0}\bra{0} (\ket{1}\bra{1})$ and $\ket{0'}\bra{0'} (\ket{1'}\bra{1'})$ and if $\Omega^0 = \frac{2 \sin^2 \theta}{1+\cos \theta}$ ($\Omega^1 = \frac{2 \sin^2 \theta}{1+\cos \theta}$) then $M^0_0 = D^0_0$ ($M^1_0 = D^1_0$), $M^0_1 = D^0_1$ ($M^1_1 = D^1_1$), $M^0_2 = D^0_2$ ($M^1_2 = D^1_2$).\\



\emph{Note:}~Here, we claim that if Alice and Bob successfully pass the LocalCHSH test, the OBStest and the POVMtest mentioned in our DI proposal, then in the actual QPQ scheme, none of Alice and Bob can retrieve any additional information in the noiseless scenario. Now, suppose that our claim is wrong i.e., Alice and Bob can pass all the tests mentioned in our scheme and later Alice can retrieve more data bits (than what she intends to know) in a single query or Bob can guess Alice's query indices with a more certain probability (than his intended probability). 



We now discuss this issue in the context of a particular form of non-i.i.d. attack, where a specific number of states are independently corrupted (more general attacks are also possible but these are outside the scope of this work). In this context, we will show that if some of the corrupted states are included during the testing phases, then there is some probability of being caught in the asymptotic limit.  
	
At the beginning of our scheme, the untrusted third party shares all the states with Alice and Bob. As in the \evp, both the parties choose the states randomly from the shared instances for the local tests at their end, the dishonest party can not guess beforehand the shared instances that the honest party will choose at his end for the local test. According to our assumption, the dishonest party can not manipulate the honest party's device once the protocol starts. So, to successfully pass the LocalCHSH test at the honest party's end, the shared states must be EPR pairs as specified in our scheme. This implies that the \evp~certifies all the states provided by the untrusted third party.
	
	
	We now explain these things more formally. Let us suppose that initially, the untrusted third party colludes with either the dishonest Alice or the dishonest Bob and shares either $\K_{\mathcal{A}}$ corrupted states in favour of Alice (let us denote this type of states as $\A$-type) or $\K_{\mathcal{B}}$ corrupted states in favour of Bob (let us denote this type of states as $\B$-type) among $\K$ shared states. So, while choosing randomly for the LocalCHSH test at honest Bob's end, the probability that a chosen state is of $\A$-type is $\frac{\K_{\mathcal{A}}}{\K}$. Similarly, for the LocalCHSH test at honest Alice's end, the probability that a chosen state is of $\B$-type is $\frac{\K_{\mathcal{B}}}{\K}$. Let us further assume that for the $\A$-type states, the value of the parameter $\mathcal{C}$ is $\mathcal{C}_{\A}$ (where $\mathcal{C}_{\A} = \mathcal{C} + \epsilon_{\A}$ such that $\epsilon_{\A} > 0$) and for the $\B$-type states, the value of the parameter $\mathcal{C}$ is $\mathcal{C}_{\B}$ (where $\mathcal{C}_{\B} = \mathcal{C} + \epsilon_{\B}$ such that $\epsilon_{\B} > 0$).
	
	Now, suppose that only Alice is dishonest and the third party supplies $\K_{\A}$ number of corrupted states (in favour of dishonest Alice) along with $(\K-\K_{\A})$ actual states. Then, in the localCHSH test at Bob's end, the probability that a chosen state is not of the $\A$-type is $\left(1-\frac{\K_{\A}}{\K}\right)$. One can easily check that this probability is also same for a chosen state in the final QPQ phase. As, dishonest Alice's aim is to gain as much additional data bits as possible in the final QPQ phase, she needs to choose the value of $\K_{\A}$ such that $(\K-\K_{\mathcal{A}})=c$ where $c$ is exponentially smaller than $\K$ (i.e., she will try to maximize the probability that a state chosen for the final QPQ phase is of the $\A$ type). Then, the probability that Bob will choose none of the corrupted states (i.e., the $\A$ type states) among his chosen $\frac{\gamma_1 \K}{2}$ states for the LocalCHSH test at his end is,
	
	\begin{align*}
	\left(1- \frac{\K_{\mathcal{A}}}{\K}\right)^{\frac{\gamma_1 \K}{2}} &= \left(\frac{c}{\K}\right)^{\frac{\gamma_1 \K}{2}}
	\end{align*}
	
	which is very small compared to $\K$. Similarly, whenever Bob is dishonest, the same thing can be shown for the LocalCHSH test at honest Alice's end. This implies that if the third party colludes with the dishonest party and supplies corrupted states then the probability that none of those corrupted states are chosen for the localCHSH test at the honest party's end is very small.
	
	In our scheme, we consider the ideal scenario where there are no channel noise. So for dishonest Alice, to successfully pass the LocalCHSH test at the honest Bob's end, the following relation must hold in the noiseless condition.
	
	\begin{align*}
	\frac{\K_{\A} \mathcal{C}_{\A}}{\K} +  \frac{(\K-\K_{\A}) \mathcal{C}}{\K} &= \mathcal{C} \\
	\K_{\A} \mathcal{C}_{\A} + (\K-\K_{\A}) \mathcal{C} &= \K \mathcal{C}\\
	\K_{\A} (\mathcal{C}_{\A} - \mathcal{C}) &= 0 
	\end{align*}
	
	Now, replacing the values of $\mathcal{C}_{\A}$ from the relation $\mathcal{C}_{\A} = \mathcal{C} + \epsilon_{\A}$, one can get,
	
	\begin{align}
	\label{relnsample}
	\K_{\A} \epsilon_{\A} &= 0 
	\end{align} 
	
	As the value of $\epsilon_{\A} > 0$, from this relation, one can easily conclude that in the noiseless scenario, the value of $\K_{\A}$ must be zero to successfully pass the LocalCHSH test at the honest Bob's end. Similarly, one can show that whenever Bob is dishonest, the value of $\K_{\B}$ must be zero to successfully pass the LocalCHSH test at the honest Alice's end. In practice, for finite number of samples, one can show that the values of $\K_{\A}$ and $\K_{\B}$ must be very small to successfully pass the local test at the honest party's end. 

Here, all the states are shared between the two parties before the start of the protocol and the dishonest party can not manipulate the honest party's device after the start of the protocol. As in this work, we focus on the {\it i.i.d.} scenario, it is straightforward to argue that either Alice and Bob abort the protocol with high probability in the asymptotic limit, or the LocalCHSH test certifies that the shared states involved in our QPQ scheme achieve the intended value of $\mathcal{C}$.
	
	

	The next DI testing is done in \bvp~where Bob and Alice perform distributed test to certify Bob's device. Here, one may think that if Bob is dishonest, then for the instances chosen in \bvp~and in \avp, he will measure in the actual measurement basis at his end to detect the fraudulent behaviour of Alice, and later for the instances to be used for the actual QPQ phase, he will measure in some different basis to guess the positions of Alice's known key bits. 
	
	From the results derived in lemma \ref{min_ent3} later, it is clear that if dishonest Bob wants to guess Alice's query indices with more certain probability then he must allow dishonest Alice to know more number of data bits in a single query. But this violates the assumption (more specifically the assumption 4) that none of the parties leak more information from their side to gain additional information from the other party. From the discussion in Lemma \ref{min_ent2}, it is also clear that for our scheme, the client Alice performs optimal strategy at her end. This implies that, for dishonest Alice, it is impossible to retrieve more data bits in a single query without manipulating the shared states or Bob's measurement device. Thus, to ensure that dishonest Alice is not getting any additional data bits, Bob must behave honestly in \bvp~to certify his device after the successful completion of \evp.
	
	In our scheme, before the \bvp, Bob generates a random bit for each of his qubits and measures his qubits accordingly. In the \bvp, Bob generates random bits for each of the Alice's qubits chosen for \bvp~and declare those bits so that Alice can measure her particles accordingly. As Bob behaves honestly in \bvp~(to restrict Alice from knowing additional data bits) and chooses all the inputs randomly for OBStest, there is no possibility that the inputs for OBStest are chosen according to some dishonest distribution.	From the analysis of Theorem \ref{cor5}, it is clear that if the inputs are chosen randomly then OBStest certifies that Bob's measurement device measures correctly in $\{\ket{0}, \ket{1}\}$ and $\{\ket{0'}, \ket{1'}\}$ basis for our proposed QPQ scheme.
	
	
	
	This implies that the successful completion of \evp~and \bvp~certifies that the shared states are EPR pairs and Bob's measurement device measures correctly for all the instances. This also implies that for all the remaining instances (that will be used for \avp~and in the actual QPQ phase), Alice has non-orthogonal qubits (i.e., either $\ket{0}$ or $\ket{0'}$ for $a_i = 0$ and either $\ket{1}$ or $\ket{1'}$ for $a_i = 1$) at her end.
	
	It is already mentioned that in our scheme, the client Alice performs optimal (POVM) measurement at her end to extract maximal number of data bits in a single query. So, after successful completion of \evp~and \bvp, Alice must behave honestly in \avp~to ensure that her measurement device is the optimal one. For this reason, Alice must measure her qubits accordingly as mentioned in KEYgen() and POVMtest() to certify her device. From the analysis of Theorem \ref{thm:povmtest0} and Theorem \ref{thm:povmtest1}, it is clear that the successful completion of \avp~certifies Alice's POVM device.
	
Note that in the proof of Theorem \ref{thm:povmtest0} and Theorem \ref{thm:povmtest1} in Appendix C (entitled {\it DI Testing of POVM Elements}), we have not imposed any dimension bound like the self-testing of POVM in a prepare and measure scenario in~\cite{TSVBB20}. So, the devices that perform a Neumark dilation of this mentioned POVM (i.e., the equivalent larger projective measurement on both the original state and some ancilla system instead of the actual POVM measurement) could still achieve the intended value of $\Omega$. But both of these operations produce the same output probabilities, which is sufficient for the purposes of this work.
	
	Hence from all these discussions, we can conclude the following-

\begin{corollary}
	\label{corrdisec}
	Either our DI proposal aborts with high probability in the asymptotic limit, or it certifies that the devices involved in our QPQ scheme achieve the intended values of $\mathcal{C}$, $\beta$, and $\Omega^0 (\text{or~} \Omega^1)$ in the LocalCHSH test, OBStest and POVMtest respectively.
\end{corollary}


	\subsubsection{\textbf{Database security against dishonest Alice}}
	
	In this subsection, we estimate the amount of raw key bits that dishonest Alice can guess in the \kep~of our proposed scheme.
	
	\begin{theorem}
		\label{min_ent1}
		After the \bvp, if Alice's measurement device is not tested then in the \kep, dishonest Alice can inconclusively (i.e., can't know the positions of the correctly guessed bits with certainty) retrieve (on average) at most $\left(\frac{1}{2}+ \frac{1}{2}\sin \theta\right)$ fraction of bits of the entire raw key.
	\end{theorem}
	
	\begin{proof} 
		After the key establishment phase, dishonest Alice ($\A^*$) and honest Bob ($\B$) share $kN$ raw key bits generated from the $kN$ copies of EPR pairs. The $i$-th copy of the state is given by $|\phi^+\rangle_{\A^*_i\B_i} = \frac{1}{\sqrt{2}}
		|00\rangle_{\A^*_i\B_i} + \frac{1}{\sqrt{2}}|11\rangle_{\A^*_i\B_i}$, where $i$-th subsystem of Alice and Bob is denoted by 
		$\A^*_i$ and $\B_i$ respectively. At Alice's side the reduced density matrix is of the form $$\rho_{\A^*_i} = 
		\tr_{\B_i}\left[|\phi^+\rangle_{\A^*_i\B_i}\langle{\phi^+}|\right] = \frac{\I_2}{2}.$$ At the beginning, Bob measures each of his part
		of the state $|\phi^+\rangle_{\A^*_i\B_i}$ in either $\{|0\rangle, |1\rangle\}$ basis or in $\{|0'\rangle,|1'\rangle\}$ basis. 
		The choice of the basis is completely random as this choice depends on the random raw key bit values chosen by Bob. Let $\rho_{\A^*_i|r_i}$ denotes the state at Alice's side after the choice of 
		Bob's measurement basis. For $r_i = 0$, we have, 
		
		\begin{align*}
		\rho_{\A^*_i|r_i =0} & = \tr_{\B_i}[\phi^+\rangle_{\A^*_i\B_i}\langle{\phi^+}|]\\
		& =  \tr_{\B_i}[\frac{1}{2}(|00\rangle + |11\rangle)_{\A^*_i\B_i}(\langle00| + \langle11|)]\\
		& = \frac{\I_2}{2}.
		\end{align*}
		
		Similarly, for $r_i =1$, we have, $\rho_{\A^*_i|r_i =1} = \frac{\I_2}{2} = \rho_{\A^*_i}$. This implies that $\rho_{\A^*_i|r_i} = \rho_{\A^*_i}$. In \bvp, Alice knows the declared $a_i$ values for all the instances. Let $\rho_{\A^*_i|a_i}$ denotes the state of Alice given the value 
		of $a_i$. According to the protocol,
		
		\begin{align*}
		\rho_{\A^*_i|a_i=0} & = \frac{1}{2} |0\rangle\langle0| + \frac{1}{2} |0'\rangle\langle0'|\\
		\rho_{\A^*_i|a_i=1} & = \frac{1}{2} |1\rangle\langle1| + \frac{1}{2} |1'\rangle\langle1'|.
		\end{align*}
		
		This implies that for a fixed $a_i = 0$ ($a_i = 1$) if Alice wants to guess the value of $r_i$ then she needs to distinguish 
		the state from the ensemble of states $\{(\frac{1}{2}|0\rangle\langle0|), (\frac{1}{2}|0'\rangle\langle0'|)\}$ 
		($\{(\frac{1}{2}|1\rangle\langle1|), (\frac{1}{2}|1'\rangle\langle1'|)\}$). In other words, whenever Bob measures his qubit and announces the bit $a_i = 0$, Alice knows that Bob gets either $\ket{0}$ or $\ket{0'}$. Similarly, when Bob 
		announces the bit $a_i = 1$, Alice knows that Bob gets either $\ket{1}$ or $\ket{1'}$. So, to retrieve the value of the original
		raw key bit, Alice needs to distinguish between the states $\ket{0}$ and $\ket{0'}$ or between the states $\ket{1}$ or $\ket{1'}$.
		
		Now if Alice's measurement device is not tested, then Alice can choose any measurement device at her side to distinguish the non-orthogonal states generated at her side. As it is known that non-orthogonal quantum states cannot be distinguished perfectly, Alice cannot guess the value of each raw key bit with certainty. This distinguishing probability has a nice relationship with the trace distance between the states in the ensemble \cite{Wilde17}. According to this relation we have,
		\begin{align*}
		\Pr_{\gs}[r_i|\rho_{\A^*_i|a_i=0}] & = \frac{1}{2}(1+\frac{1}{2}|||0\rangle\langle0| - |0'\rangle\langle 0'|||_1)\\
		& \leq \frac{1}{2}(1+\sqrt{1 - F(|0\rangle\langle0|, |0'\rangle\langle0'|)})\\
		& = \frac{1}{2}(1+\sin \theta) = \frac{1}{2} + \frac{1}{2}\sin \theta.
		\end{align*}
		
		One can check that $\Pr_{\gs}[r_i|\rho_{\A^*_i|a_i=0}] = \Pr_{\gs}[r_i|\rho_{\A^*_i|a_i=1}]$. This implies that if Alice is allowed to use any measurement device at her end after \bvp~then Alice can successfully retrieve the $i$-th raw key bit $r_i$ with probability at most $\left(\frac{1}{2} + \frac{1}{2}\sin \theta\right)$. As after \bvp, the qubits at Alice's side are all independent, dishonest Alice can inconclusively retrieve (on average) atmost $\left(\frac{1}{2} + \frac{1}{2}\sin \theta\right)$ fraction of bits of the entire raw key.
	\end{proof}

	{\it Note : The term `inconclusive' implies here that the client Alice can't predict the positions of the correctly guessed key bits with certainty. For example, whenever Alice tries to guess each of the key bits randomly, she can guess correctly for around half of the instances. However, she can't tell with certainty what are those instances for which she guesses correctly. 
	
	Now let us consider the operator $E=\{E_0,E_1,E_2\}$ where,
	
	\begin{eqnarray*}
		E_0 &\equiv& \frac{1}{\sin{\theta}} (\sin{\theta}\ket{0}-\cos{\theta}\ket{1})(\sin{\theta}\bra{0}-\cos{\theta}\bra{1})\\
		E_1 &\equiv& \frac{1}{\sin{\theta}}\ket{1}\bra{1}\\
		E_2 &\equiv& I - E_0 - E_1
	\end{eqnarray*}
	
	One can easily check that this operator $E=\{E_0,E_1,E_2\}$ is not a valid POVM as $E_2$ is not positive semi-definite. Let us consider the operator $E' = \{E'_0,E'_1\}$ where
	
	\begin{eqnarray*}
		E'_0 &\equiv& E_0 + \frac{E_2}{2}\\
		E'_1 &\equiv& E_1 + \frac{E_2}{2}
	\end{eqnarray*}
	
	Now, this is a valid POVM to distinguish $\ket{0}$ and $\ket{0'}= (\cos{\theta}\ket{0}+\sin{\theta}\ket{1})$. If a party consider the strategy that for the outcome $E'_0$, he consider the corresponding input qubit as $\ket{0}$ and $\ket{0'}$ otherwise, then one can check that this is the POVM corresponding to the optimal success probability (i.e., $\frac{1}{2}+\frac{\sin{\theta}}{2}$) in distinguishing $\ket{0}$ and $\ket{0'}$. However, the guessing outcome of this POVM is uncertain as the inconclusive element (the outcome which can't determine the state with certainty) $E_2$ is involved in both the elements $E'_0$ and $E'_1$ of the POVM $E'$. So, in the proof of Theorem \ref{min_ent1}, we refer the optimal guessing probability as inconclusive (i.e., uncertainty about the positions of the known key bits).}

	In theorem \ref{min_ent1}, we show that if Alice is allowed to choose any measurement device at her side then, on average, dishonest Alice can correctly retrieve at most around $\left(\frac{1}{2} + \frac{\sin{\theta}}{2}\right)$ fraction of bits of the entire raw key but she remains uncertain about the positions of those known bits. 
	

	However, in this DI proposal, dishonest Alice's ($\A^*$) main intension is to conclusively (i.e., with certainty about the positions of the correctly guessed key bits) retrieve as many raw key (as well as final key) bits as possible because otherwise she can't know which data bits she has retrieved correctly. For this reason, dishonest Alice has to perform the mentioned POVM measurement at her end to retrieve maximum number of raw key bits conclusively. Because of this, we can get a bound on the number of raw key bits that dishonest Alice can retrieve (on average) in this proposed DI-QPQ scheme.

		
		\begin{lemma}
			\label{min_ent2}
			After the \kep~of our proposed scheme, either the protocol aborts with high probability in the asymptotic limit, or dishonest Alice's strategy ($\A^*$) can retrieve (on average) $(1 - \cos{\theta})$ fraction of bits of the entire raw key.
		\end{lemma}

		\begin{proof}
			
			According to our proposal, after the \avp, the client Alice has $kN$ independent non-orthogonal qubits at her end. For each of these instances, dishonest Alice now tries to distinguish between the non-orthogonal states either $\ket{0}$ and $\ket{0'}$ (for $a_i = 0$) or $\ket{1}$ and $\ket{1'}$ (for $a_i = 1$). 
			
			
            In this regard, she chooses the measurement device $\{M^0_0, M^0_1,M^0_2\}$ when Bob announces $a_i=0$ and measurement device $\{M^1_0, M^1_1,M^1_2\}$ when Bob announces $a_i=1$.

            Whenever the outcome is $M^0_0$ ($M^1_0$), Alice concludes that the state is $\ket{0}$ ($\ket{1}$). If it is $M^0_1$ ($M^1_1$), she concludes that the state is $\ket{0'}$ ($\ket{1'}$). The guessing remains inconclusive (i.e., can't guess the outcome with certainty) only when the measurement outcome is $M^0_2$ ($M^1_2$). 

            In~\cite{iva87}, it is already mentioned that the maximum success probability in distinguishing two non-orthogonal states is ($1-\cos{\theta}$). From Theorem~\ref{correct}, we get that in our protocol, the success probability of Alice in guessing a key bit correctly and conclusively is also $(1-\cos{\theta})$. As Alice has to measure each of her qubits independently depending on the declared $a_i$ values, on average she can conclusively retrieve $(1 - \cos{\theta})$ fraction of bits of the entire raw key. This concludes the proof.
			
		\end{proof}

	For this proposed DI-QPQ scheme, the database contains $N$ number of data bits. Now relating definition \ref{data_priv} and equation \ref{finaldata}, we can derive the following bound on the value of $\tau$.
	
		
		\begin{corollary}
			\label{upper_lambda}
			In our full DI-QPQ proposal, for dishonest Alice and honest Bob, either the protocol aborts with high probability in the asymptotic limit, or dishonest Alice can guess on average $\tau$ fraction of bits of the final key, where
			\begin{equation}
			\noindent
			\tau \leq \left(1 - \cos{\theta}\right)^k 
			\end{equation}
			
			Replacing the value of $\left(1 - \cos{\theta}\right)^k$ with the upper bound mentioned in equation \ref{relation4}, we can get the following upper bound on the value of $\tau$.
			
			\begin{equation}
			\noindent
			\label{lambda}
			\boxed{\tau < \frac{2}{N} }
			\end{equation}
		\end{corollary}
		This relation shows that for this DI-QPQ proposal, $\tau$ is small compared to $N$.
		
		Now, we validate the probabilistic definition of data privacy for this proposed scheme and show that the probability $\Pr\left[|X-\mathbb{E}[X]| > \delta \wedge \text{protocol doesn't abort} \right]$ is negligible. More specifically, we will calculate the probability with which dishonest Alice can guess more than the expected number of final key bits (with a deviation more than the $\epsilon$ fraction of the expected number of final key bits).
		
		
		 The negligibility of the probability $\Pr\left[|X-\mathbb{E}[X]| > \delta \wedge \text{protocol doesn't abort} \right]$ can be shown using the properties of basic probability theory. Note that the probability $\Pr\left[|X-\mathbb{E}[X]| > \delta \wedge \text{protocol doesn't abort} \right]$ is upper bounded by both $\Pr\left[|X-\mathbb{E}[X]| > \delta\right]$ and $\Pr\left[\text{protocol doesn't abort} \right]$, according to the properties $\Pr[A \wedge B] \leq \Pr[A]$ and $\Pr[A \wedge B] \leq \Pr[B]$. As in our scheme, we consider the i.i.d. assumption, there will be two different subcases- 1) all the devices attain ideal values in all the testing phases (i.e., in LocalCHSHtest, OBStest and POVMtest) 2) all the devices don't attain ideal values in all the testing phases. 
		
		For the first subcase, from the correctness result (i.e., the value of $P_c$ for our scheme in equation \ref{boundpc}) and the DI security statement in Corollary \ref{corrdisec}, one can easily conclude that $\Pr\left[|X-\mathbb{E}[X]| > \delta \right] \leq negl(N)$ where $negl(N)$ denotes negligible in $N$. For the second subcase, by an analysis similar to the proof of Theorem~\ref{cor5} and from the DI security statement in Corollary \ref{corrdisec}, it can be concluded that $\Pr\left[\text{protocol doesn't abort} \right] \leq negl(N)$. This implies that for both of these two subcases, $\Pr\left[|X-\mathbb{E}[X]| > \delta \wedge \text{protocol doesn't abort} \right] \leq negl(N)$ (under the i.i.d. assumption).
		
		Although it is easy to derive the negligibility of the expression $\Pr\left[|X-\mathbb{E}[X]| > \delta \wedge \text{protocol doesn't abort} \right]$ for both the two subcases, in general for the second subcase, it is hard to derive the exact bound on the probability with which dishonest Alice can guess more than the expected number of final key bits. For our proposed scheme, as Alice performs optimal POVM measurement at her end, it is relatively easier to derive an upper bound on the parameter $P_d$ for our scheme because it is unlikely that dishonest Alice can retrieve more number of raw key bits (on average) by performaing any other measurements at her end.
		
		
		To derive an exact bound on the parameter $P_d$, like the discussion in Subsection B (entitled "parameter estimation for private query phase"), here also we consider that the random variable $X$ denotes the number of final key bits known by dishonest Alice and $E[X]$ is the expected value in honest scenario.
		
		Now, from the Chernoff-Hoeffding inequality~\cite{WH} mentioned in proposition \ref{propchernoff}, we can write the following,
		
		\begin{align}
		\noindent
		\label{pdval}
		\nonumber
		& \Pr\left[|X-\mathbb{E}[X]| \geq \delta_{CH} \wedge \text{protocol doesn't abort} \right] \\
		&\leq \exp(-2\delta_{CH}^2 N)
		\end{align}
		
		Here, we want to estimate the probability that the value of $X$ lie outside the error margin $\delta_{CH} = \epsilon \left(1 - \cos{\theta}\right)^k N$ from the expected value.
		
		From the relation in equation \ref{relation4}, it can be easily derived that whenever Bob chooses the value of $\theta$ such that $\left(1 - \cos{\theta}\right)^k =\frac{1}{N}$, the equation \ref{pdval} becomes,
		
\begin{align}
\noindent
\nonumber
& \Pr\left[|X-\mathbb{E}[X]| \geq \epsilon \wedge \text{protocol doesn't abort} \right] \\
&\leq \exp(-2 \epsilon^2 N)
\end{align}
		
		So, according to the definition \ref{data_priv}, in our proposed scheme, for dishonest Alice and honest Bob, the value of the parameter $P_d$ is upper bounded by,
		
		\begin{equation}
		\noindent
		\label{pdbound}
		\boxed{P_d \leq \exp(-2 \epsilon^2 N) }
		\end{equation}
		
		In practice, this probability is low as the value of $N$ is very large. So, for our proposed scheme, the probability that dishonest Alice can know more than the expected number of final key bits (with a deviation more than the $\epsilon$ fraction of the expected number of final key bits) and the protocol does not abort is very low.
		

	As an example, here we consider our scheme as a $1$ out of $2$ probabilistic oblivious transfer (i.e., $N=2$ and $k=1$) to evaluate the performance. From the expression \ref{pdbound}, if we consider $\epsilon = \frac{1}{2}$, we can conclude that for $1$ out of $2$ probabilistic oblivious transfer, the cheating probability of dishonest Alice in guessing more than the expected number of final key bits (which is $1$ in this case) will be upper bounded by
	
	\begin{equation}
	\label{12pdbound}
	P_d \leq e^{-1} \approx 0.368
	\end{equation}

	The comparative study between maximum inconclusive (i.e., the positions of the correct bits can't be guessed with certainty) success probability and maximum conclusive (i.e., the positions of the correct bits can be guessed with certainty) success probability is shown in figure \ref{comp}. From the figure, it is clear that the maximum inconclusive success probability outperforms maximum conclusive success probability for small values of $\theta$. 
	\begin{figure}[htbp]
		\includegraphics[scale=0.32]{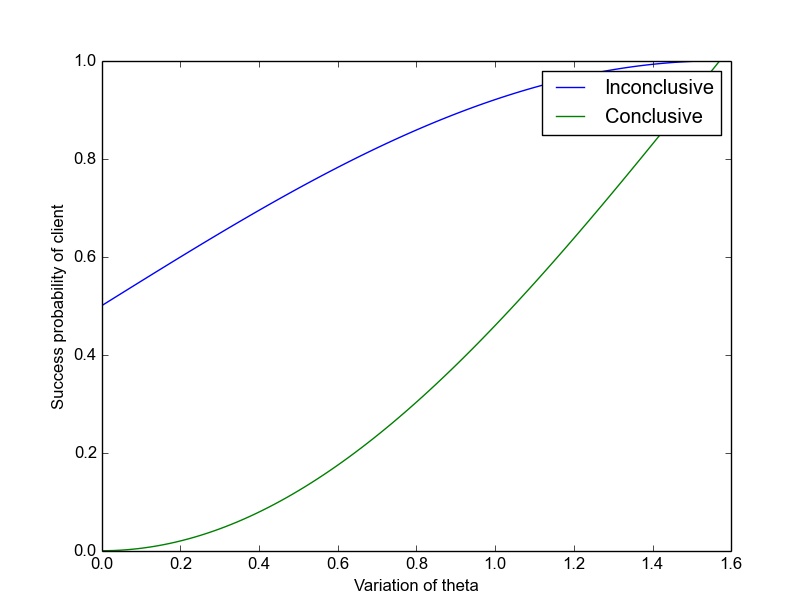}
		\caption{Comparison between maximum inconclusive and conclusive success probability of the client}
		\label{comp}
	\end{figure}
	
	\subsubsection{\textbf{User security against dishonest Bob}}


\begin{lemma}
	\label{min_ent3}
	In our proposed DI-QPQ scheme, after $l$ many queries to the $N$-bit database by Alice, dishonest Bob ($\B^*$) can successfully guess whether a particular index $i$ belongs to Alice's query index set $\ind_l$ (i.e., $i \in \ind_l$) with probability atmost $\frac{l}{N}$, i.e,
	\begin{equation}
	\noindent
	\Pr(\text{Bob guesses~} i\in \ind_l ) \leq \frac{l}{N} \nonumber
	\end{equation} 
\end{lemma}  

\begin{proof}
	At the \kep~of our proposed scheme, Alice does not broadcast anything about her measurement outcome. So, dishonest Bob has no information about Alice's measurement outcomes and her known key bits. Now, Alice queries $l$ many times to the database and retrieves $l$ many data bits. After these $l$ many queries, dishonest Bob will try to guess those query indices of Alice. As, Bob has no information about the known final key bits of Alice, he has to guess these $l$ many indices (out of the $N$ data bits) randomly.
	
	So, for any $i$-th data bit, dishonest Bob can guess whether $i \in \ind_l$ with probability atmost $\frac{l}{N}$. This completes the proof.
\end{proof}

From the proof of Lemma \ref{min_ent2}, it is clear that if $i$ denotes an index of the database then $$\Pr(\text{Bob guesses~}i\in \ind_l) \leq \frac{l}{N} $$ 

This implies that whenever Bob guesses a particular index from the data bits, the chosen index will be in Alice's query index set with probability around $\frac{l}{N}$. Now, here we assume that after $l$ many queries, Alice's query index set $\ind_l$ has $l$ many data bits and Alice chooses these $l$ bits independently. That means if Bob guesses $l$ many indices after the query phases then the expected value of the number of indices ($\ind_{\B^*}$) that dishonest Bob guesses correctly from the index set $\ind_l$ will be,


\begin{align}
\label{totalprob2}
E[\ind_{\B}] & = \Pr(\text{Bob guesses~}i\in \ind_l). l \nonumber\\
& \leq \frac{l^2}{N}
\end{align}

Now, comparing equation \ref{totalprob2} with the expression in the definition \ref{user_priv}, we can derive the following upper bound on the value of $\delta$ for our proposed scheme.

	
	\begin{corollary}
		\label{upper_delta}
		
		In our proposed QPQ scheme, for dishonest Bob and honest Alice, either the protocol aborts with high probability in the asymptotic limit, or dishonest Bob can guess on average $\delta$ fraction of indices from Alice's query index set $\ind_l$ where,
		\begin{equation}
		\noindent
		\label{delta1}
		\delta \leq \left(\frac{l}{N}\right)
		\end{equation}
	\end{corollary}
	Usually, in practice, the size of the database (i.e., $N$) is exponentially larger as compared to the size of Alice's query index set (i.e., $l$). More precisely, $N=l^n$, for some positive integer $n$. 
	
	Now, replacing this value of $N$ in equation \ref{delta1}, we can get the following bound on the value of $\delta$.

	\begin{equation}
	\noindent
	\label{delta}
	\boxed{\delta \leq \frac{1}{l^{(n-1)}}} 
	\end{equation} 
	
	where $n$ is a positive integer such that $n > 1$. This relation shows that for this full DI-QPQ proposal, $\delta$ is small compared to $l$.

Now, we validate the probabilistic definition of user privacy against dishonest Bob for our full DI proposal and derive the exact bound on the security parameter $P_u$. From the result of Lemma \ref{min_ent2}, it is clear that the probability with which dishonest Bob can guess whether a particular index $i$ belongs to Alice's query index set $\ind_l$ is upper bounded by $\frac{l}{N}$.

Here in Lemma~\ref{min_ent2}, the upper bound is calculated considering the scenario that the protocol does not abort. This implies that, 

\begin{align}
\noindent
\nonumber
& \Pr\left[\text{Bob guesses~}i\in \ind_l  \wedge \text{protocol doesn't abort} \right] \\
&\leq \frac{l}{N}
\end{align}

So, according to the definition \ref{user_priv}, for dishonest Bob and honest Alice, the value of the parameter $P_u$ in our scheme is upper bounded by,

\begin{equation}
\noindent
\label{puval}
\boxed{P_u \leq \frac{l}{N}} 
\end{equation}

	One can easily check that this probability is low as, in practice, the size of the database (i.e., $N$) is exponentially larger as compared to the size of Alice's query index set (i.e., $l$). So, for our proposed scheme, dishonest Bob can correctly guess whether a particular database index belongs to Alice's query index set with small probability.

	
	For performance evaluation, here also we consider our scheme as $1$ out of $2$ probabilistic oblivious transfer (i.e., $N=2$, $k=1$ and $l=1$). From expression \ref{puval}, we get that the value of $P_u$ for our scheme is upper bounded by,
	
	\begin{equation}
	\label{12puval}
	P_u \leq \frac{1}{2} \approx 0.5 
	\end{equation} \\

\section{Discussion and Conclusion}

Initially, all the QPQ schemes were proposed considering that the devices involved are trusted. Thus, a significant portion of the security issues depends on the functionality of the underlying devices. To remove such assumptions, Maitra et al~\cite{MPR17} first initiated the idea of DI in the QPQ domain by proposing a semi DI version of the QPQ scheme~\cite{Yang}. In this present draft, we move one step further and propose a novel QPQ scheme considering maximally entangled states with full DI certification to improve the robustness. We also discuss the optimality of the number of raw key bits that client Alice can retrieve for QKD-based QPQ schemes and show that in our proposed scheme, Alice retrieves the optimal number of raw key bits. Contrary to all the existing QPQ schemes which analyze the security issues considering certain eavesdropping strategies, here in this draft, we analyze the security of our scheme in the most general way considering all the attacks that preserve the correctness condition. We further manage to get upper bounds on the cheating probabilities for both the dishonest client and the dishonest server. As the recent QPQ schemes incorporate the idea of QKD, along with the other applications of oblivious transfer, QPQ may soon become a crucial near-term application of quantum internet.

	\section*{Appendix}
	
	Here we mention the procedure of choosing the initial sample size such that the two parties can certify the devices with desired accuracy and confidence. We also mention here the proofs of Theorem \ref{cor5}, Theorem \ref{thm:povmtest0} and Theorem \ref{thm:povmtest1} which confirms the functionality of the measurement devices involved in our protocol. We further show how ``up to unitary" devices preserve the correctness condition of our proposed scheme. In the first subsection, we show how one can choose the initial sample size for the protocol in practice. In the next subsection, we restate Theorem \ref{cor5} and mention the detail proof of the Theorem. We further restate the functionality of the POVM devices in the form of a two party game (namely POVMgame) and mention the detail proofs of Theorem \ref{thm:povmtest0} and Theorem \ref{thm:povmtest1} in the next subsection. Finally, in the last subsection, we show the correctness of our scheme whenever the devices are ``up to unitary" as compared to the original devices. 
	
		
		\section*{Appendix A : Choice of Initial Sample Size in Practice}

		In this section, we discuss how Bob and Alice choose the initial sample size required for the proposed DI-QPQ scheme. In practice, Alice and Bob have to allow some deviation (from the actual value of the parameter because of finite number of samples) in each testing phase to certify the devices.

		It is well-known that the approximate number of samples required to distinguish two events having probabilities $p$ and $p(1+\epsilon)$ (for small $\epsilon$) is $O(\frac{1}{p\epsilon^2})$. One may require approximately $\frac{64}{p\epsilon^2}$ samples to achieve a confidence of more than 99\% in distinguishing these two events. A more involved expression of the sample size is recently derived in~\cite{BM18} using Chernoff-Hoeffding~\cite{WH} bound which is stated in proposition \ref{propchernoff}. 
		
		
		For the testing phases mentioned in our proposed scheme, we consider $X_i = 1$ whenever Bob and Alice win the $i$-th instance and $X_i = 0$ otherwise. Now if we consider $\mathbb{E}[X_i] = p$ and want to estimate the success probability $p$ within an error margin of $\epsilon p$ and confidence $1 - \eta$, then from the result mentioned in~\cite{BM18}, we can write that the required sample size $m_\text{req}$ will be,
		
		\begin{equation}
		\label{reqsamp}
		m_\text{req} \geq \frac{1}{2\epsilon^2 p^2} \ln \frac{1}{\eta}
		\end{equation}
		
		From this expression of $m_\text{req}$, Bob and Alice can estimate the expected number of samples required for a particular testing phase to certify a device with certain accuracy and confidence. 
		
		Now to ensure that Bob and Alice get the expected number of samples in each phase (to conclude with certain accuracy and confidence), they choose the total initial sample size (i.e., the value of $\K$) as follows-
		
		\begin{itemize}
			\item Before the start of the protocol, Alice and Bob (based on the protocol description) calculate the minimum number of samples required (according to the expression in inequality \ref{reqsamp}) in each testing phase to conclude with chosen accuracy and confidence.
			\item Then they choose the value of $k$ to calculate the total number of samples required in {\it private query phase}.
			\item At last, they sum up all these number of samples required in each testing phase along with the number of samples required in private query phase to calculate the total initial sample size.
			\item After getting the initial sample size, Bob and Alice proceed to each of the testing phases (according to the description of the protocol), select the required number of samples randomly from the shared instances and check whether the value of a predefined parameter lies within the interval $[V-\epsilon p, V+\epsilon p]$ where $V$ is the actual value of the parameter obtained for asymptotically large number of samples. If this is the case, then with accuracy $\epsilon p$ and chosen confidence $(1-\eta)$, they conclude that the devices behave accordingly.
		\end{itemize}
		
		As an example, here we demonstrate the method of choosing samples for the first phase namely source device verification phase. Before the start of the protocol, Bob and Alice choose the accuracy and confidence parameter for this phase with which they want to certify the source device and let $n_1$ be the required number of samples. Now, similar to this \evp, they calculate the required number of samples for the other phases also and from that calculate the required number of total initial samples $\K$.
		
		Bob and Alice then calculate the value of $\gamma_1$ such that,
		\begin{align*}
		n_1 &= \gamma_1 \K
		\end{align*} 
		
		After getting the value of $\gamma_1$, Bob first chooses $\frac{\gamma_1 \K}{2}$ number of samples randomly from the $\K$ shared states and then from the rest $\left(\K-\frac{\gamma_1 \K}{2}\right)$ number of samples, Alice randomly chooses $\frac{\gamma_1 \K}{2}$ number of samples. They then discuss their chosen instances publicly, get the qubits from the other party and perform LocalCHSHtest for their chosen $\gamma_1 \K$ samples. In this similar way, they choose the samples for the remaining testing phases.
		
		Note that this is a particular way of choosing samples that we demonstrate here from the several other possibilities. It is needless to say that one may follow any other strategies for choosing samples in different testing phases.
		

	\section*{Appendix B : Statement and Proof of Theorem~\ref{cor5}}

	\emph{Theorem \ref{cor5}:~	In OBStest, either Bob's measurement devices achieve $\beta = \frac{1}{\sqrt{2}|(\cos{\theta}-\sin{\theta})|}$ (i.e., his devices measure correctly in $\{\ket{0}, \ket{1}\}$ and $\{\ket{0'}, \ket{1'}\}$ basis where $\ket{0'} = (\cos \theta\ket{0} + \sin \theta\ket{1})$, $\ket{1'} = (\sin \theta\ket{0} - \cos \theta\ket{1})$), or the protocol aborts with high probability in the asymptotic limit.\\}

	\emph{Proof:}~Suppose, Alice's measurement operators are $\{A^s_c\}_{s,c \in \{0,1\}}$, corresponding to the input $s$ and output $c$. Similarly, Bob's measurement operators are $\{B^r_b\}_{r,b \in \{0,1\}}$, corresponding to the input $r$ and output $b$. This implies that Alice's observable, corresponding to the input $s \in \{0,1\}$ is,
	
	\begin{equation}
	\label{eq:aobs}
	A_s = \sum_{c \in \{0,1\}}(-1)^cA^s_c. 
	\end{equation}
	Similarly, Bob's observable corresponding to the input $r \in \{0,1\}$ is,
	\begin{equation}
	\label{eq:bobs}
	B_r = \sum_{b \in \{0,1\}}(-1)^bB^r_b. 
	\end{equation}
	
	Note that, in the OBStest, the fraction $\beta$ is being computed as follows,
	
	\begin{align}
	\label{eq:beta}
	\beta &= \frac{1}{4} \sum_{s,r,c,b \in \{0,1\}}(-1)^{d_{srcb}}\alpha^{1\oplus s} \langle \phi_{\A\B}| A^s_c \otimes B^r_b |\phi_{\A\B}\rangle\\
	& = \frac{1}{4} \langle \phi_{\A\B}| \wa |\phi_{\A\B}\rangle, 
	\end{align}
	
	where $\wa := \left(\sum_{s,r,c,b \in \{0,1\}}(-1)^{d_{srcb}}\alpha^{1\oplus s}  A^s_c \otimes B^r_b  \right)$ which is the operator corresponding to the OBStest. We can also rewrite the expression of $\wa$ in the following way,
	
	\begin{align}
	\nonumber
	\wa  & =  \left(\sum_{r,c,b \in \{0,1\}}(-1)^{d_{srcb}}\alpha  A^0_c \otimes B^r_b \right) + \\ \nonumber
	&  \left(\sum_{r,c,b \in \{0,1\}}(-1)^{d_{srcb}} A^1_c \otimes B^r_b\right) \\ \label{eq:w}
	& = \wa^0 + \wa^1,
	\end{align}
	
	where $\wa^0 :=  \left(\sum_{r,c,b \in \{0,1\}}(-1)^{d_{srcb}}\alpha  A^0_c \otimes B^r_b \right) $ and $\wa^1 :=\left(\sum_{r,c,b \in \{0,1\}}(-1)^{d_{srcb}} A^1_c \otimes B^r_b \right)$. Note that, we can simplify further the expression of $\wa^0$ in following way,
	
	\begin{align*}
	\wa^0 &= \sum_{r,c,b \in \{0,1\}}(-1)^{d_{srcb}}\alpha  A^0_c \otimes B^r_b \\
	& =  \sum_{\substack{r,c,b \in \{0,1\}\\c\oplus b = 0}}\alpha  A^0_c \otimes B^r_b  - \sum_{\substack{r,c,b \in \{0,1\}\\c\oplus b \neq 0}}\alpha  A^0_c \otimes B^r_b \\
	& = \alpha (A^0_0 \otimes B^0_0 + A^0_0 \otimes B^1_0 + A^0_1 \otimes B^0_1 + A^0_1 \otimes B^1_1) - \\
	& \alpha (A^0_0 \otimes B^0_1 + A^0_0 \otimes B^1_1 + A^0_1 \otimes B^0_0 + A^0_1 \otimes B^1_0)\\
	& = \alpha [A^0_0 \otimes (B^0_0 - B^0_1) - A^0_1 \otimes (B^0_0 - B^0_1) + \\
	&A^0_0 \otimes (B^1_0 - B^1_1) - A^0_1 \otimes (B^1_0 - B^1_1)]\\
	& = \alpha[(A^0_0 - A^0_1)\otimes (B^0_0 - B^0_1) + (A^0_0 - A^0_1)\otimes (B^1_0 - B^1_1) ]\\
	& = \alpha(A^0_0 - A^0_1)\otimes [(B^0_0 - B^0_1) +(B^1_0 - B^1_1)].
	\end{align*}
	By substituting the values of $(A^0_0 -A^0_1)$, $(B^0_0 - B^0_1)$ and $(B^1_0 - B^1_1)$ from equation \ref{eq:aobs} and equation \ref{eq:bobs} on the right-hand side of the above expression we get,
	
	\begin{equation}
	\label{eq:w0}
	\wa^0 = \alpha A_0 \otimes (B_0 + B_1).
	\end{equation}
	
	Using similar approach we get the following simplified version of the expression $\wa^1$.
	
	\begin{equation}
	\label{eq:w1}
	\wa^1 = A_1 \otimes (B_0 - B_1).
	\end{equation}
	
	By substituting the values of $\wa^0$ and $\wa^1$ from equation \ref{eq:w0} and equation \ref{eq:w1} to equation \ref{eq:w} we get,
	
	\begin{equation}
	\label{eq:w_simp}
	\wa = \alpha A_0\otimes (B_0 +B_1) + A_1 \otimes (B_0 - B_1).
	\end{equation}  
	Note that, the right-hand side of this OBStest operator $\wa$ is exactly same as the tilted CHSH operator, described in \cite{kan17}. 
	
	So, the expression of $\wa^2$ can be written as
	
	\begin{align*}
	\wa^2 & = \alpha^2 A_0^2 \otimes (B_0^2 + B_1^2 + \{B_0,B_1\}) \\
	&~~~~+ A_1^2 \otimes (B_0^2 + B_1^2 - \{B_0,B_1\})\\
	&= (\alpha^2A_0^2 + A_1^2 + \alpha \{A_0,A_1\}) \otimes B_0^2\\ 
	&~~~~+ (\alpha^2A_0^2 + A_1^2 - \alpha \{A_0,A_1\}) \otimes B_1^2\\
	&~~+ (\alpha^2 A_0^2-A_1^2) \otimes \{B_0,B_1\}- \alpha [A_0,A_1] \otimes [B_0,B_1].
	\end{align*}
	
	Using the property $A_j^2 \leq \id$, we can rewrite this expression as,
	
	\begin{align*}
	\wa^2 &\leq [(\alpha^2 +1).\id + \alpha \{A_0,A_1\}] \otimes B_0^2\\ 
	&~~~~+ [(\alpha^2 +1).\id - \alpha \{A_0,A_1\}] \otimes B_1^2\\
	&~~+ \id \otimes (\alpha^2 - 1) \{B_0,B_1\} - \alpha [A_0,A_1] \otimes [B_0,B_1].
	\end{align*}
	
	Since $-2.\id \leq \{A_0,A_1\} \leq 2.\id$, we have,
	
	\begin{align*}
	[(\alpha^2 +1).\id \pm \alpha \{A_0,A_1\}] &\geq 0
	\end{align*}
	
	We can use the property $B_k^2 \leq \id$ and get the following simplified expression
	
	\begin{align*}
	\wa^2 &\leq 2(\alpha^2+1).\id \otimes \id + \id \otimes (\alpha^2 - 1) \{B_0,B_1\} \\
	&- \alpha [A_0,A_1] \otimes [B_0,B_1]
	\end{align*}
	
	We can further upper bound the commutators by their matrix modulus and use the relation $|[A_0,A_1]|\leq 2.\id$ to get the following expression
	
	\begin{align}
	\label{expwa2}
	\wa^2 &\leq 2(\alpha^2+1).\id \otimes \id + T_{\alpha} \otimes \id
	\end{align}	
	
	where $T_{\alpha}:=(\alpha^2-1)\{B_0,B_1\}+2\alpha|[B_0,B_1]|$
	
	Now the expression of $T_{\alpha}$ can also be upper bounded by upper bounding the anti commutators by its matrix modulus. So, the value of $T_{\alpha}$ will be upper bounded by,
	
	\begin{align*}
	T_{\alpha} &\leq (\alpha^2-1)|\{B_0,B_1\}|+2\alpha|[B_0,B_1]|
	\end{align*}
	
	Again one can easily check that, 
	
	\begin{align}
	\label{b0b1}
	~ &~ |\{B_0,B_1\}|^2 + |[B_0,B_1]|^2 \nonumber\\
	~ &= |B_0B_1+B_1B_0|^2 + |B_0B_1-B_1B_0|^2 \nonumber \\ 
	~ &= (B_0B_1+B_1B_0)^{\dagger}(B_0B_1+B_1B_0) \nonumber\\
	&+(B_0B_1+B_1B_0)^{\dagger}(B_0B_1+B_1B_0)\nonumber \\
	~ &= 2 (B_0B_1)^{\dagger}(B_0B_1)+2(B_1B_0)^{\dagger}(B_1B_0)
	\end{align}
	
	Let us consider that the measurement operators are projective i.e., $(A^s_c)^2=A^s_c$ and $(B^r_b)^2=B^r_b$. Now for the projectors $B^0_0$ and $B^0_1$, $(B^0_0+B^0_1)=\id$. From this relation we can write,
	
	\begin{align*}
	(B^0_0+B^0_1)(B^0_0+B^0_1)^{\dagger} &= \id\\
	B^0_0.{B^0_0}^{\dagger}+B^0_0.{B^0_1}^{\dagger}+B^0_1.{B^0_0}^{\dagger}+B^0_1.{B^0_1}^{\dagger} &= \id\\
	(B^0_0+B^0_1)+(B^0_0.{B^0_1}^{\dagger}+B^0_1.{B^0_0}^{\dagger})&=\id
	\end{align*}
	
	This implies,
	
	\begin{align*}
	(B^0_0.{B^0_1}^{\dagger}+B^0_1.{B^0_0}^{\dagger})&= 0
	\end{align*}
	
	Now $B_0=(B^0_0-B^0_1)$. From this we can get,
	
	\begin{align*}
	B_0B_0^{\dagger}&=(B^0_0-B^0_1)(B^0_0-B^0_1)^{\dagger}\\
	~ &= B^0_0.{B^0_0}^{\dagger}-B^0_0.{B^0_1}^{\dagger}-B^0_1.{B^0_0}^{\dagger}+B^0_1.{B^0_1}^{\dagger}\\
	~ &= (B^0_0+B^0_1)-(B^0_0.{B^0_1}^{\dagger}+B^0_1.{B^0_0}^{\dagger})\\
	~ &= \id + 0 = \id
	\end{align*}
	
	Similarly, it can be shown that, $B_1B_1^{\dagger}=B_1^{\dagger}B_1=\id$.
	
	So, from equation \ref{b0b1}, we can write that for unitary observables $B_0$ and $B_1$,
	
	\begin{align*}
	|\{B_0,B_1\}|^2 + |[B_0,B_1]|^2 &= 2 (B_0B_1)^{\dagger}(B_0B_1)\\
	&+2(B_1B_0)^{\dagger}(B_1B_0)\\
	~ &= 2\id + 2\id =4\id
	\end{align*}
	
	This implies, 
	
	\begin{align*}
	|\{B_0,B_1\}| &= \sqrt{4.\id - |[B_0,B_1]|^2 }
	\end{align*}
	
	So, the simplified expression of $T_{\alpha}$ will be of the form
	
	\begin{align*}
	T_{\alpha} &= (\alpha^2-1)\sqrt{4.\id - |[B_0,B_1]|^2 }+2\alpha|[B_0,B_1]|
	\end{align*}
	
	This is the maximum value of $T_{\alpha}$ and here $T_{\alpha}$ attains this maximum value because of projective observables. Now one can easily check that the value of $|[B_0,B_1]|$ which maximizes the value of $T_{\alpha}$ is $|[B_0,B_1]|=\frac{4\alpha}{(\alpha^2 + 1)}.\id$ and the corresponding value of $T_{\alpha}$ is $2(\alpha^2+1).\id$. This implies that, 
	
	\begin{align*}
	T_{\alpha} &= 2(\alpha^2+1).\id
	\end{align*}

	From this value of $T_{\alpha}$ and from the expression of $\wa^2$ mentioned in equation \ref{expwa2}, we can easily write that the value of $\wa$ is upper bounded by the following quantity-
	\begin{equation}
	\label{proofwa}
	\wa \leq\sqrt{2(\alpha^2+1)\id \otimes \id + T_{\alpha} \otimes \id}
	\end{equation}
	where $T_{\alpha} = 2(\alpha^2+1).\id$.
	
	Now, the value $\beta$ obtained in OBStest of our algorithm can be written alternatively as $\beta=\frac{\tr(\wa\rho_{\A\B})}{4}$ where $\rho_{\A\B}$ is the density matrix representation of the shared states $\ket{\phi}_{\A\B}$ i.e., $\rho_{\A\B} = \ket{\phi}_{\A\B}\bra{\phi}$. From this expression of $\beta$, one can easily derive that the value of $\beta^2$ is upper bounded by the following quantity,
	
	\begin{align*}
	\beta^2 &\leq \frac{\tr(\wa^2\rho_{\A\B})}{16}
	\end{align*}

	Now if we assume $t_{\alpha} : = \frac{1}{4} \tr(T_{\alpha}\rho_{\B}) - \frac{1}{2}(\alpha^2 -1)$ (where $\rho_{\B}$ is the reduced state at Bob's side) then using this value of $t_{\alpha}$ along with the value of $\wa$ obtained from expression \ref{proofwa} and the upper bound on the value of $\beta^2$, we can write that the $\beta$ value mentioned in OBStest is upper bounded by the following quantity,
	\begin{equation}
	\label{eq:upp_bet}
	\beta \leq \frac{\sqrt{\alpha^2 + t_{\alpha}}}{2},
	\end{equation}

	Now here, the observables are projective (i.e., $B_j^2 =\id$) and the anti commutator $\{B_0,B_1\}$ is a positive semi definite operator. Since we have already shown that the value of the anti-hermitian operator $|[B_0,B_1]|$ is $|[B_0,B_1]|=\frac{4\alpha}{(\alpha^2 + 1)}.\id$ for the maximum value of $T_{\alpha}$, the spectral decomposition of $[B_0,B_1]$ can be written as,
	
	\begin{align*}
	[B_0,B_1] &= \frac{4\alpha.i}{(\alpha^2 + 1)} (P_+ - P_-)
	\end{align*} 
	
	for some orthogonal projectors $P_+$ and $P_-$ such that $(P_+ + P_-)=\id$. As it is well-known that for projective observables, the commutator holds the property $B_0[B_0,B_1]B_0=-[B_0,B_1]$, we can easily conclude that $B_0P_{\pm}B_0=P_{\mp}$. Let us consider that $\{\ket{e^0_j}\}_j$ is an orthonormal basis for the support of $P_+$ and $\{\ket{e^1_j}\}_j$ is an orthonormal basis for the support of $P_-$ where $\ket{e^1_j}=B_0\ket{e^0_j}$. We define the unitary operator $U_0$ as
	
	\begin{align*}
	U_0\ket{e^d_j}= \frac{1}{\sqrt{2}}[\ket{0}+ (-1)^di\ket{1}] \ket{j}
	\end{align*}  
	
	for $d \in \{0,1\}$. Then we can easily verify that 
	
	\begin{align*}
	U_0[B_0,B_1]U_0^\dagger = \frac{4\alpha.i}{(\alpha^2 +1)} \sigma_Y \otimes \id
	\end{align*}
	
	Since $\{\id,\sigma_X,\sigma_Y,\sigma_Z\}$ constitute an operator basis for linear operators acting on $\mathbb{C}^2$, without loss of generality we can write
	
	\begin{align*}
	U_0B_0U_0^\dagger = \id \otimes K_0+\sigma_X \otimes K_x+\sigma_Y \otimes K_y+\sigma_Z \otimes K_z
	\end{align*}
	
	for some hermitian operator $K_0,K_x,K_y,K_z$. For projective observable $B_0$, one can easily check that $\{B_0,[B_0,B_1]\}=0$. This relation satisfies only when $K_0=K_y=0$. As $B_0^2=\id$, $K_x$ and $K_z$ must satisfy the relation
	
	\begin{align*}
	K_x^2 + K_z^2=\id~~~ \text{and}~~~ [K_x,K_z]=0
	\end{align*}

	So, we can easily write $K_x$ and $K_z$ in the following form.
	
	\begin{align*}
	K_x &= \sum_{j}\sin{2\gamma_j}\ket{j}\bra{j}\\
	K_z &= \sum_{j}\cos{2\gamma_j}\ket{j}\bra{j}
	\end{align*}
	
	for some angle $\gamma_j$ and some orthonormal basis $\{\ket{j}\}$. This implies that
	
	\begin{align*}
	U_0B_0U_0^\dagger &= \sigma_X \otimes K_x + \sigma_Z \otimes K_z\\
	&= \sum_{j} (\sin{2\gamma_j}\sigma_X + \cos{2\gamma_j}\sigma_Z) \otimes \ket{j}\bra{j}
	\end{align*}
	
	We now consider the following controlled unitary to align the qubit observables.
	
	\begin{align*}
	U_1 &= \sum_{j}\exp(i\gamma_j.\sigma_Y)\otimes\ket{j}\bra{j}
	\end{align*}
	
	Now for this defined unitary operator, one can easily check that
	
	\begin{align*}
	U_1U_0B_0U_0^\dagger U_1^\dagger &= \sigma_Z \otimes \id \\
	U_1U_0[B_0,B_1]U_0^\dagger U_1^\dagger &= \frac{4\alpha.i}{(\alpha^2 +1)} \sigma_Y \otimes \id
	\end{align*}
	
	Like observable $B_0$, an analogous reasoning can also be applied for observable $B_1$ and from that, without loss of generality we can write
	
	\begin{align*}
	U_1U_0B_1U_0^\dagger U_1^\dagger = \sigma_X \otimes K_x' +  \sigma_Z \otimes K_z'
	\end{align*}
	
	Since the commutators are positive semi definite and the observables are projective, we can easily check that
	
	\begin{align*}
	\{B_0,B_1\} = |\{B_0,B_1\}| &= \sqrt{4.\id - |[B_0,B_1]|^2}\\
	&= \frac{2(\alpha^2-1)}{(\alpha^2 + 1)}.\id
	\end{align*}
	
	Now we define $2\theta := \arcsin\left(\frac{\alpha^2-1}{\alpha^2 +1}\right) \in [0,\frac{\pi}{2}]$. From this relation, imposing consistency on the anti commutator, we get,
	
	\begin{align*}
	K_z' &= \sin{2\theta}.\id
	\end{align*}
	
	On the other hand, imposing consistency on the commutator, we get,
	
	\begin{align*}
	K_x' &= \cos{2\theta}.\id
	\end{align*}
	
	Now, from the relation $2\theta := \arcsin\left(\frac{\alpha^2-1}{\alpha^2 +1}\right)$, we can get the value of $\alpha$ which is 
	
	\begin{align*}
	\alpha &= \frac{(\cos{\theta}+\sin{\theta})}{|(\cos{\theta}-\sin{\theta})|}
	\end{align*}

	For this value of $\alpha$, we can easily derive that $t_{\alpha}=1$. This implies that the simplified expression for $\beta$ is,
	\begin{equation}
	\label{eq:bet_val}
	\beta = \frac{\sqrt{1+\alpha^2}}{2}
	\end{equation} 
	
	where $\alpha=\frac{(\cos{\theta}+\sin{\theta})}{|(\cos{\theta}-\sin{\theta})|}$. Now from this value of $\alpha$, we can derive the value of $\sqrt{1+\alpha^2}$ which is,
	
	\begin{equation}
	\sqrt{1+\alpha^2} = \frac{\sqrt{2}}{|(\cos{\theta}-\sin{\theta})|}
	\end{equation}
	
	So, the value of $\beta$ corresponding to these observables $B_0$ and $B_1$ will be,
	
	\begin{equation}
	\beta = \frac{1}{\sqrt{2}|(\cos{\theta}-\sin{\theta})|}
	\end{equation}
	
	If we consider $U_{\B}=U_0^\dagger U_1^\dagger$ then the observables $B_0$ and $B_1$ will be of the form
	
	\begin{align*}
	B_0 &= U_{\B} (\sigma_Z \otimes \id) U_{\B}^\dagger\\
	B_1 &= U_{\B} (\cos{2\theta}\sigma_X + \sin{2\theta}\sigma_Z \otimes \id) U_{\B}^\dagger
	\end{align*}
	
	This implies that in the OBStest, if $\beta$ is equal to $\frac{1}{\sqrt{2}|(\cos{\theta}-\sin{\theta})|}$, then the corresponding observables of Bob are same as the one described in the OBStest. This concludes the proof.

	\section*{Appendix C : DI Testing of POVM Elements}
	In the QPQ protocol, Alice needs to make sure her measurement device works properly, i.e, she should be able to distinguish between $|0\rangle$ ($|1\rangle$) and $|0'\rangle$ ($|1'\rangle$) with certainty for (on average) around $(1- \cos \theta)$ fraction of instances, where, $|0'\rangle = \cos \theta |0\rangle + \sin \theta |1\rangle$ ($|1'\rangle = \sin \theta |0\rangle - \cos \theta |1\rangle$). Let $M^0 = \{M^0_0,M^0_1,M^0_2\}$ ($M^1 = \{M^1_0,M^1_1,M^1_2\}$) the set of Alice's POVMs, which distinguishes the states $\{|0\rangle, |0'\rangle\}$ ($\{|1\rangle, |1'\rangle\}$). Here we show that if the input states are of the form $|0\rangle$ ($|1\rangle$) or $|0'\rangle$ ($|1'\rangle$) and Alice manages to distinguish the states with certainty for (on average) around $(1-\cos \theta)$ fraction of instances then $M^0_i = D^0_i$ ($M^1_i = D^1_i$) for $i \in \{0,1,2\}$. In order to prove this, here we first represent the interactions between Bob and Alice in the proposed DI-QPQ protocol in the form of a game, called POVMgame($M^y,y$) for better understanding, where the agent $A_1$ represents Bob and the agent $A_2$ represents Alice. The game is as follows,
	
	\RestyleAlgo{boxruled}
	\LinesNumbered
	\IncMargin{-1.5em}
	\begin{algorithm}[htbp]
		\begin{itemize}
			\item $A_1$ declares $y$ whenever the state at his side (and also at $A_2$'s side) is either $\rho^y_x$ or $\rho^y_{x \oplus 1}$ for the randomly chosen $x$ values (i.e., for $x \in_R \{0,1\}$), where $\rho^0_0 = |0\rangle\langle0|$, $\rho^0_1 = |0'\rangle \langle 0'|$, $\rho^1_0 = |1\rangle\langle1|$ and $\rho^1_1 = |1'\rangle\langle1'|$.
			\item $A_2$ measures her state (which is either $\rho^y_x$ or $\rho^y_{x \oplus 1}$) using the POVM $M^y$ (where $M^y=\{M^y_0, M^y_1, M^y_2\}$) and sends the outcome $b \in \{0,1,2\}$ to $A_1$.
			\item $A_2$ wins if and only if, $\Omega^y=\sum_{b,x \in\{0,1\}} (-1)^{b\oplus x} \tr[M^y_b\rho^y_x] = \frac{2 \sin^2 \theta}{1+\cos \theta}$.
		\end{itemize}
		\caption{POVMgame($M^y,y$)}
		\label{app:POVMtest} 
	\end{algorithm}
	\DecMargin{-1.5em}

	\begin{theorem}
		\label{app_thm:max_winD}
		In POVMgame($M^y,y$), if $A_1$ chooses $y=0$ and the states at $A_2$'s end are $\rho^0_0 = |0\rangle\langle0|$ and $\rho^0_1 = |0'\rangle \langle 0'|$ and if $A_2$ manages to win the game, i.e., $\Omega^0 = \frac{2 \sin^2 \theta}{1+\cos \theta}$, then this implies, $A_2$'s measurement devices are of the following form (up to a global unitary),
		
		\begin{align}
		M^0_0 & = \frac{1}{(1+\cos \theta)} (|1'\rangle \langle1'|)\\
		M^0_1 & = \frac{1}{(1+\cos \theta)} (|1\rangle \langle1|)\\
		M^0_2 & = \id - M^0_0- M^0_1,
		\end{align}
		
		where, $|1'\rangle = \sin \theta |0\rangle - \cos \theta |1\rangle$.
		
	\end{theorem}

	\begin{proof}
		In the POVMgame($M^y,y$), $A_2$ applies $M^0$ on a single qubit state $\rho^0_x$ (where $x \in_R \{0,1\}$). So, without any loss of generality we can assume that $M^0_i \in M^0$ has the following form,
		
		\begin{equation}
		M^0_i = \lambda^0_i(\id + \m^0_i.\asig),
		\end{equation}
		where $\m^0_i = [m^0_{i0},m^0_{i1},m^0_{i2}]$ and it is the Bloch vector with length at most one, $\asig = [\sigma_X, \sigma_Y, \sigma_Z]$ are the Pauli matrices and $\lambda_i \geq 0$. 
			In this case, one may wonder how we can fix the dimension of $M^0_i$ here in the proof in DI scenario? The answer to this question is that here we are able to fix the dimension of $M^0_i$ and choose this particular general form because of the tests mentioned earlier in the source device verification phase (corresponding result mentioned in Corollary \ref{cor3}) and DI testing phase for Bob's measurement device (corresponding result mentioned in Theorem \ref{cor5}) which certifies that the states shared between Alice and Bob are EPR pairs (up to a unitary) and after Bob's projective measurements, the reduced states at Alice's side are one qubit states. 
		Now, the condition $\sum_{i=0}^2 M^0_i = \id$ leads us to the following relations,
		
		\begin{align}
		\label{app_eq:lamb}
		\sum_{i=0}^2 \lambda^0_i &= 1\\ \label{app_eq:lambm}
		\sum_{i=0}^2 \lambda^0_i\m^0_i &= 0.
		\end{align}
		
		In terms of Bloch vector we can rewrite $\rho^0_0, \rho^0_1$ in following way,
		
		\begin{align}
		\rho^0_0 &= \frac{1}{2}(\id + \sigma_Z)\\
		\rho^0_1 &= \frac{1}{2}(\id + \sin 2\theta \sigma_X + \cos 2\theta \sigma_Z).
		\end{align}
		
		In the POVMgame($M^y,y$) if $A_2$ would like to maximizes her winning probability then she needs to maximize the following expression,
		
		\begin{equation}
		\Omega^0 = \sum_{b,x \in{0,1}} (-1)^{b\oplus x} \tr[M^0_b\rho^0_x].
		\end{equation}
		
		In terms of $\lambda^0_i, \m^0_i, \asig$ we have, 
		
		\begin{align*}
		\tr[M^0_0\rho^0_0] & = \lambda^0_0(1+m^0_{02})\\
		\tr[M^0_0\rho^0_1] & = \lambda^0_0(1+m^0_{00}\sin 2\theta  +m^0_{02} \cos 2\theta)\\
		\tr[M^0_1\rho^0_0] & = \lambda^0_1(1+m^0_{12})\\
		\tr[M^0_1\rho^0_1] & = \lambda^0_1(1+ m^0_{10}\sin 2\theta + m^0_{12} \cos 2\theta).
		\end{align*}

		In terms of $\lambda^0_i, \m^0_i, \asig$ can rewrite $\Omega^0$ as,
		
		\begin{align}
		\Omega^0 &= \lambda^0_0(1+m^0_{02}) + \lambda^0_1(1+ m^0_{10}\sin 2\theta + m^0_{12} \cos 2\theta) \\ \nonumber
		&- \lambda^0_0(1+m^0_{00}\sin 2\theta  +m^0_{02} \cos 2\theta) - \lambda^0_1(1+m^0_{12}).
		\end{align}
		
		As both $\tr[M^0_0\rho^0_1]$ and $\tr[M^0_1\rho^0_0]$ are positive quantity, hence 
		\begin{equation}
		\label{app_eq:w1}
		\Omega^0 \leq \lambda^0_0(1+m^0_{02}) + \lambda^0_1(1+ m^0_{10}\sin 2\theta + m^0_{12} \cos 2\theta),
		\end{equation}
		and this implies,
		
		\begin{align}
		\label{app_eq:m00}
		(1+m^0_{00}\sin 2\theta  +m^0_{02} \cos 2\theta) & =0\\ \label{app_eq:m12}
		(1+m^0_{12}) &= 0.
		\end{align}
		
		According to the equation \ref{app_eq:m12} we have $m^0_{12} = -1$. As both of $\rho^0_0, \rho^0_1$ lie on the $XZ$ plane and due to the freedom of global unitary without loss of generality we can assume $m^0_{01} = m^0_{11} = m^0_{21} =0$. Due to the positivity constraint ($M^0_i \geq 0$) we have,
		
		\begin{align}
		\label{app_eq:m01}
		{m^0_{00}}^2 + {m^0_{02}}^2 &\leq 1\\ \label{app_eq:m10}
		{m^0_{10}}^2 + {m^0_{12}}^2 &\leq 1\\ \label{app_eq:m20}
		{m^0_{20}}^2 + {m^0_{22}}^2 &\leq 1.
		\end{align}
		
		By combining the constraint equation \ref{app_eq:m12} with the equation \ref{app_eq:m10} we get, $m^0_{10} = 0$. Hence,
		
		\begin{equation}
		\label{app_eq:m}
		\m^0_1 = [0,0,-1],
		\end{equation}
		and by substituting the values of $m^0_{10}, m^0_{12}$ in equation \ref{app_eq:w1} we get the following expression of $\Omega^0$,
		
		\begin{equation}
		\label{app_eq:w2}
		\Omega^0 \leq \lambda^0_0(1+m^0_{02}) + \lambda^0_1(1- \cos 2\theta).
		\end{equation}

		Note that the expression of $\Omega^0$ maximizes when $\lambda^0_0,m^0_{02}, \lambda^0_1$ maximizes and from the constraint equation \ref{app_eq:m01} we get that ${m^0_{00}}^2 + {m^0_{02}}^2 \leq 1$. Hence, without any loss of generality we can assume that for the maximum value of $\Omega^0$, ${m^0_{00}}^2 + {m^0_{02}}^2 = 1$. So, we can parameterize $m^0_{00},m^0_{02}$ as $\sin \alpha, \cos \alpha$ ($0 \leq \alpha \leq 2\pi$). By substituting $m^0_{00} = \sin \alpha, m^0_{02} = \cos \alpha$ in equation \ref{app_eq:m00} we get,
		
		\begin{align*}
		1 + \sin \alpha \sin 2\theta + \cos \alpha \cos 2\theta & =0
		\end{align*}
		This implies, 
		\begin{align*}
		\cos(\alpha - 2\theta) &= -1.
		\end{align*}
		
		As $0 \leq \alpha \leq 2\pi$, so $\cos (\alpha - 2\theta) = -1$this implies,
		
		\begin{align}
		\nonumber
		\alpha - 2\theta &= \pi \quad \text{and,}\\ \label{app_eq:alpha}
		\alpha &= \pi+ 2\theta.
		\end{align}
		
		From the equation \ref{app_eq:alpha} we get,
		
		\begin{equation}
		\m^0_0 = [-\sin 2 \theta, 0, -\cos 2\theta].
		\end{equation}
		
		By substituting the expression of $\m_0$ in equation \ref{app_eq:w2} we get,
		
		\begin{equation}
		\label{app_eq:w3}
		\Omega^0 \leq (\lambda^0_0+\lambda^0_1)(1 - \cos 2\theta).
		\end{equation}

		By substituting the values of $\m^0_0, \m^0_1$ in equation \ref{app_eq:lambm} we get,
		
		\begin{align}
		\label{app_eq:lambb1}
		\lambda^0_2m^0_{22}- \lambda^0_0\cos 2\theta = \lambda^0_1&\\\label{app_eq:lambm1}
		\lambda^0_2m^0_{20} = \lambda^0_0 \sin 2\theta&.
		\end{align}
		
		Due to the constraint equation \ref{app_eq:m20}, similar to $\m^0_0$, here we parameterize the expression of $m^0_{20},m^0_{22}$ as $\sin \beta, \cos \beta$ respectively. By substituting $m^0_{20} = \sin \beta$ and $m^0_{22} = \cos \beta$ in the equations \ref{app_eq:lambb1} and \ref{app_eq:lambm1} we get,
		
		\begin{align}
		\label{app_eq:lamb2}
		\lambda^0_2\cos \beta-\lambda^0_0\cos 2\theta = \lambda^0_1&\\\label{app_eq:lambm2}
		\lambda^0_2\sin \beta = \lambda^0_0 \sin 2\theta&.
		\end{align}

		By solving equation \ref{app_eq:lamb2} and equation \ref{app_eq:lambm2} together with equation \ref{app_eq:lamb} we get,
		
		\begin{align}
		\label{app_eq:lamb0}
		\lambda^0_0 &= \frac{\sin \beta}{\sin \beta + \sin 2\theta + \sin(2\theta - \beta)}\\ \label{app_eq:lamb1}
		\lambda^0_1 &= \frac{\sin (2\theta-\beta)}{\sin \beta + \sin 2\theta + \sin(2\theta - \beta)}.
		\end{align}
		
		Hence,
		
		\begin{align}
		\label{app_eq:lambsum1}
		\lambda^0_0 + \lambda^0_1 &=  \frac{\sin \beta + \sin(2\theta - \beta)}{\sin \beta + \sin 2\theta + \sin(2\theta - \beta)}\\ \label{app_eq:lambsum2}
		&= \frac{\cos (\theta - \beta)}{\cos \theta + \cos (\theta - \beta)}.
		\end{align}

		According to equation \ref{app_eq:w3}, for getting a tight upper bound on $\Omega^0$ we need to maximize $(\lambda^0_0 + \lambda^0_1)$. By equating $\frac{d(\lambda^0_0 + \lambda^0_1)}{d\beta} = 0$ in equation \ref{app_eq:lambsum2} we get,
		
		\begin{equation}
		\frac{\sin(\theta - \beta)\cos \theta}{\cos \theta + \cos (\theta - \beta)} = 0.
		\end{equation}
		
		This implies,
		
		\begin{equation}
		\beta = \theta.
		\end{equation}
		
		It is also easy to check that for $\theta = \beta$, the expression $\frac{d^2 (\lambda^0_0 + \lambda^0_1)}{d \beta^2} < 0$. Hence, the expression $\lambda^0_0 + \lambda^0_1$ maximizes at the point $\beta = \theta$. Substituting this relation in equations \ref{app_eq:lamb0} and \ref{app_eq:lamb1} we get,
		
		\begin{equation}
		\lambda^0_0 = \lambda^0_1 = \frac{1}{2(1+\cos \theta)}.
		\end{equation}

		By substituting the values of $\lambda^0_0+\lambda^0_1$ in equation \ref{app_eq:lamb} we get,
		
		\begin{equation}
		\lambda^0_2 =\frac{\cos \theta}{1 + \cos \theta}.
		\end{equation}
		
		Hence, we get,
		
		\begin{equation}
		\label{app_eq:w4}
		\Omega^0 \leq \frac{2\sin^2 \theta}{1+\cos \theta},
		\end{equation}
		
		and
		
		\begin{align}
		M^0_0 & = \frac{1}{2(1+\cos \theta)} (\id -\sin 2\theta \sigma_X - \cos 2 \theta \sigma_Z)\\
		M^0_1 & = \frac{1}{2(1+\cos \theta)} (\id -\sigma_Z)\\
		M^0_2 & = \frac{\cos \theta}{1+ \cos \theta}( \id + \sin \theta \sigma_X+ \cos \theta \sigma_Z).
		\end{align}
		
		We can rewrite the above expressions as follows,
		
		\begin{align*}
		M^0_0 & = \frac{1}{(1+\cos \theta)} (|1'\rangle \langle1'|)\\
		M^0_1 & = \frac{1}{(1+\cos \theta)} (|1\rangle \langle1|)\\
		M^0_2 & = \id - M^0_0- M^0_1,
		\end{align*}
		
		where $|1'\rangle = \sin \theta |0\rangle - \cos \theta |1\rangle$. This concludes the proof.
		
	\end{proof}
	
	Similarly for the input states $\ket{1}, \ket{1'}$, one can conclude the following.

	\begin{theorem}
		\label{app_thm:max_winD'}
		In POVMgame($M^y,y$), if $A_1$ chooses $y=1$ and the states at $A_2$'s end are $\rho^1_0 = |1\rangle\langle1|$ and $\rho^1_1 = |1'\rangle \langle 1'|$ and if $A_2$ manages to win the game, i.e., $\Omega^1 = \frac{ \sin^2 \theta}{1+\cos \theta}$, then this implies, $A_2$'s measurement devices are of the following form (up to a global unitary),
		
		\begin{align}
		M^1_0 & = \frac{1}{(1+\cos \theta)} (|0'\rangle \langle0'|)\\
		M^1_1 & = \frac{1}{(1+\cos \theta)} (|0\rangle \langle0|)\\
		M^1_2 & = \id - M^1_0- M^1_1,
		\end{align}
		
		where $|0'\rangle = \cos \theta |0\rangle + \sin \theta |1\rangle$.
		
	\end{theorem}

	\begin{proof}
		In the POVMgame($M^y,y$), $A_2$ applies $M^1$ on a single qubit state $\rho^1_x$ (where $x \in_R \{0,1\}$). So, without any loss of generality we can assume that $M^1_i \in M^1$ has the following form,
		
		\begin{equation}
		M^1_i = \lambda^1_i(\id + \m^1_i.\asig),
		\end{equation}
		where $\m^1_i = [m^1_{i0},m^1_{i1},m^1_{i2}]$ and it is the Bloch vector with length at most one, $\asig = [\sigma_X, \sigma_Y, \sigma_Z]$ are the Pauli matrices and $\lambda^1_i \geq 0$. The condition $\sum_{i=0}^2 M^1_i = \id$ leads us to the following relations,
		
		\begin{align}
		\label{app_eq:lamb'}
		\sum_{i=0}^2 \lambda^1_i &= 1\\ \label{app_eq:lambm'}
		\sum_{i=0}^2 \lambda^1_i\m^1_i &= 0.
		\end{align}
		
		In terms of Bloch vector we can rewrite $\rho^1_0, \rho^1_1$ in following way,
		
		\begin{align}
		\rho^1_0 &= \frac{1}{2}(\id - \sigma_Z)\\
		\rho^1_1 &= \frac{1}{2}(\id - \sin 2\theta \sigma_X - \cos 2\theta \sigma_Z).
		\end{align}
		
		In the POVMgame($M^y,y$) if $A_2$ would like to maximizes her winning probability then she needs to maximize the following expression,
		
		\begin{equation}
		\Omega^1 = \sum_{b,x \in{0,1}} (-1)^{b\oplus x} \tr[M^1_b\rho_x].
		\end{equation}
		
		In terms of $\lambda^1_i, \m^1_i, \asig$ we have, 
		
		\begin{align*}
		\tr[M^1_0\rho^1_0] & = \lambda^1_0(1-m^1_{02})\\
		\tr[M^1_0\rho^1_1] & = \lambda^1_0(1-m^1_{00}\sin 2\theta  -m^1_{02} \cos 2\theta)\\
		\tr[M^1_1\rho^1_0] & = \lambda^1_1(1-m^1_{12})\\
		\tr[M^1_1\rho^1_1] & = \lambda^1_1(1- m^1_{10}\sin 2\theta - m^1_{12} \cos 2\theta).
		\end{align*}

		In terms of $\lambda^1_i, \m^1_i, \asig$ can rewrite $\Omega^1$ as,
		
		\begin{align}
		\Omega^1 &= \lambda^1_0(1-m^1_{02}) + \lambda^1_1(1- m^1_{10}\sin 2\theta - m^1_{12} \cos 2\theta) \\ \nonumber
		&- \lambda^1_0(1-m^1_{00}\sin 2\theta -m^1_{02} \cos 2\theta) - \lambda^1_1(1-m^1_{12}).
		\end{align}
		
		As both $\tr[M^1_0\rho^1_1]$ and $\tr[M^1_1\rho^1_0]$ are positive quantity, hence 
		\begin{equation}
		\label{app_eq:w1'}
		\Omega^1 \leq \lambda^1_0(1-m^1_{02}) + \lambda^1_1(1- m^1_{10}\sin 2\theta - m^1_{12} \cos 2\theta),
		\end{equation}
		and this implies,
		
		\begin{align}
		\label{app_eq:m00'}
		(1-m^1_{00}\sin 2\theta  -m^1_{02} \cos 2\theta) & =0\\ \label{app_eq:m12'}
		(1-m^1_{12}) &= 0.
		\end{align}
		
		According to the equation \ref{app_eq:m12'} we have $m^1_{12} = 1$. As both of $\rho^1_0, \rho^1_1$ lie on the $XZ$ plane and due to the freedom of global unitary without loss of generality we can assume $m^1_{01} = m^1_{11} = m^1_{21} =0$. Due to  the positivity constraint ($M^1_i \geq 0$) we have,
		
		\begin{align}
		\label{app_eq:m01'}
		{m^1_{00}}^2 + {m^1_{02}}^2 &\leq 1\\ \label{app_eq:m10'}
		{m^1_{10}}^2 + {m^1_{12}}^2 &\leq 1\\ \label{app_eq:m20'}
		{m^1_{20}}^2 + {m^1_{22}}^2 &\leq 1.
		\end{align}
		
		By combining the constraint equation \ref{app_eq:m12'} with the equation \ref{app_eq:m10'} we get, $m^1_{10} = 0$. Hence,
		
		\begin{equation}
		\label{app_eq:m'}
		\m^1_1 = [0,0,1],
		\end{equation}
		and by substituting the values of $m^1_{10}, m^1_{12}$ in equation \ref{app_eq:w1'} we get the following expression of $\Omega^1$,
		
		\begin{equation}
		\label{app_eq:w2'}
		\Omega^1 \leq \lambda^1_0(1-m^1_{02}) + \lambda^1_1(1- \cos 2\theta).
		\end{equation}

		Note that the expression of $\Omega^1$ maximizes when $\lambda^1_0,\lambda^1_1$ maximizes and $m^1_{02}$ minimizes and from the constraint equation \ref{app_eq:m01} we get that ${m^1_{00}}^2 + {m^1_{02}}^2 \leq 1$. Hence, without any loss of generality we can assume that for the maximum value of $\Omega^1$, ${m^1_{00}}^2 + {m^1_{02}}^2 = 1$. So, we can parameterize $m^1_{00},m^1_{02}$ as $\sin \alpha, \cos \alpha$ ($0 \leq \alpha \leq 2\pi$). By substituting $m^1_{00} = \sin \alpha, m^1_{02} = \cos \alpha$ in equation \ref{app_eq:m00} we get,
		
		\begin{align*}
		1 - \sin \alpha \sin 2\theta - \cos \alpha \cos 2\theta & =0
		\end{align*}
		This implies, 
		\begin{align*}
		\cos(\alpha - 2\theta) &= 1.
		\end{align*}
		
		As $0 \leq \alpha \leq 2\pi$, so $\cos (\alpha - 2\theta) = 1$ this implies,
		
		\begin{align}
		\nonumber
		\alpha - 2\theta &= 0 \quad \text{or} \quad 2\pi \quad \text{and,}\\ \label{app_eq:alpha'}
		\alpha &= 2\theta \quad \text{or} \quad (2\pi + 2\theta).
		\end{align}
		
		One can easily check that for both these values of $\alpha$, the value of $m^1_{00}$ and $m^1_{02}$ are $\sin{2\theta}$ and $\cos{2\theta}$ respectively. From the equation \ref{app_eq:alpha'} we get,
		
		\begin{equation}
		\m^1_0 = [\sin 2 \theta, 0, \cos 2\theta].
		\end{equation}
		
		By substituting the expression of $\m^1_0$ in equation \ref{app_eq:w2'} we get,
		
		\begin{equation}
		\label{app_eq:w3'}
		\Omega^1 \leq (\lambda^1_0+\lambda^1_1)(1 - \cos 2\theta).
		\end{equation}
		
		
		
		By substituting the values of $\m^1_0, \m^1_1$ in equation \ref{app_eq:lambm'} we get,
		
		\begin{align}
		\label{app_eq:lambb'}
		\lambda^1_2m^1_{22}+ \lambda^1_0\cos 2\theta + \lambda^1_1 = 0&\\\label{app_eq:lambm1'}
		\lambda^1_2m^1_{20} + \lambda^1_0 \sin 2\theta = 0&.
		\end{align}
		
		Due to the constraint equation \ref{app_eq:m20'}, similar to $\m^1_0$, here we parameterize the expression of $m^1_{20},m^1_{22}$ as $\sin \beta, \cos \beta$ respectively. By substituting $m^1_{20} = \sin \beta$ and $m^1_{22} = \cos \beta$ in the equations \ref{app_eq:lambb'} and \ref{app_eq:lambm1'} we get,
		
		\begin{align}
		\label{app_eq:lamb2'}
		\lambda^1_2\cos \beta+\lambda^1_0\cos 2\theta + \lambda^1_1 = 0&\\\label{app_eq:lambm2'}
		\lambda^1_2\sin \beta + \lambda^1_0 \sin 2\theta = 0&.
		\end{align}

		By solving equation \ref{app_eq:lamb2'} and equation \ref{app_eq:lambm2'} together with equation \ref{app_eq:lamb'} we get,
		
		\begin{align}
		\label{app_eq:lamb0'}
		\lambda^1_0 &= \frac{\sin \beta}{\sin \beta + \sin(2\theta - \beta) - \sin 2\theta}\\ \label{app_eq:lamb1'}
		\lambda^1_1 &= \frac{\sin (2\theta-\beta)}{\sin \beta + \sin(2\theta - \beta) - \sin 2\theta}.
		\end{align}
		
		Hence,
		
		\begin{align}
		\label{app_eq:lambsum1'}
		\lambda^1_0 + \lambda^1_1 &=  \frac{\sin \beta + \sin(2\theta - \beta)}{\sin \beta + \sin(2\theta - \beta) - \sin 2\theta}\\ \label{app_eq:lambsum2'}
		&= \frac{\cos (\theta - \beta)}{\cos (\theta - \beta)-\cos \theta }.
		\end{align}

		According to equation \ref{app_eq:w3'}, for getting a tight upper bound on $\Omega^1$ we need to maximize $(\lambda^1_0 + \lambda^1_1)$. By equating $\frac{d(\lambda^1_0 + \lambda^1_1)}{d\beta} = 0$ in equation \ref{app_eq:lambsum2'} we get,
		
		\begin{equation}
		\frac{-\sin(\theta - \beta)\cos \theta}{\cos \theta + \cos (\theta - \beta)} = 0.
		\end{equation}
		
		This implies,
		
		\begin{equation}
		\text{either} \quad \beta = \theta \quad \text{or} \quad (\theta-\beta)=\pi.
		\end{equation}
		
		Now, one can easily check that for $\theta = \beta$, the eigen value of $M^1_2$ becomes negative which is not possible. So, the solution here is $(\theta-\beta)=\pi$. One can also check that for $(\theta-\beta)=\pi$, the expression $\frac{d^2 (\lambda^1_0 + \lambda^1_1)}{d \beta^2} < 0$. Hence, the expression $\lambda^1_0 + \lambda^1_1$ maximizes at the point $(\theta-\beta)=\pi$. Substituting this relation in equations \ref{app_eq:lamb0'} and \ref{app_eq:lamb1'} we get,
		
		\begin{equation}
		\lambda^1_0 = \lambda^1_1 = \frac{1}{2(1+\cos \theta)}.
		\end{equation}

		By substituting the values of $\lambda^1_0+\lambda^1_1$ in equation \ref{app_eq:lamb'} we get,
		
		\begin{equation}
		\lambda^1_2 =\frac{\cos \theta}{1 + \cos \theta}.
		\end{equation}
		
		Hence, we get,
		
		\begin{equation}
		\label{app_eq:w4'}
		\Omega^1 \leq \frac{2\sin^2 \theta}{1+\cos \theta}.
		\end{equation}
		
		The corresponding measurement operators using which $A_2$ can achieve $\Omega^1 = \frac{2\sin^2 \theta}{1+\cos \theta}$ is given by,
		
		\begin{align}
		M^1_0 & = \frac{1}{2(1+\cos \theta)} (\id +\sin 2\theta \sigma_X + \cos 2 \theta \sigma_Z)\\
		M^1_1 & = \frac{1}{2(1+\cos \theta)} (\id +\sigma_Z)\\
		M^1_2 & = \frac{\cos \theta}{1+ \cos \theta}(\id- \sin \theta \sigma_X - \cos \theta \sigma_Z).
		\end{align}
		
		We can rewrite the above expressions as follows,
		
		\begin{align*}
		M^1_0 & = \frac{1}{(1+\cos \theta)} (|0'\rangle \langle0'|)\\
		M^1_1 & = \frac{1}{(1+\cos \theta)} (|0\rangle \langle0|)\\
		M^1_2 & = \id - M^1_0- M^1_1,
		\end{align*}
		
		where, $|0'\rangle = \cos \theta |0\rangle + \sin \theta |1\rangle$. This concludes the proof.
		
	\end{proof}

	From the results of theorem \ref{app_thm:max_winD} and \ref{app_thm:max_winD'}, it is clear that the success probability $(1-\cos{\theta})$ in distinguishing two non-orthogonal states $\{\ket{0}, \ket{0'}\}$ (or $\{\ket{1}, \ket{1'}\}$) can be achieved only when the chosen POVM's are of the specified form as chosen by Alice for the QPQ scheme. From the results mentioned in~\cite{iva87}, one can easily conclude that $(1-\cos{\theta})$ is the optimal success probability that can be achieved in distinguishing two non-orthogonal states. So from these two results, one can easily conclude that Alice can get optimal number of raw key bits in this QPQ scheme.
	

		\section*{Appendix D : Correctness of the scheme considering devices ``up to a unitary"}  
		
		In the device independent testing phases of our proposed scheme (i.e., in source device verification phase, Bob's measurement device verification phase and Alice's POVM device verification phase), the tests certify that the devices perform exactly same as that is mentioned in the proposed scheme or ``up to a unitary" of the actual device. This implies that the source device supplies states that are exactly of the same form or ``up to a unitary" (i.e., the states received after applying a unitary operation) of the original state and the measurement devices measure in exactly the same specified basis or ``up to a unitary" (i.e., the measurement bases received after applying a unitary operation) of the actual basis.
		
		Thus, because of this ``up to unitary" deviation, it is necessary to check whether the protocol preserves its correctness condition whenever the devices are ``up to unitary" of the actual devices. 
		
		let us consider that the measurement devices of Alice and Bob perform measurements in the bases which are up to unitary $U_2$ such that
		
		\begin{equation*}
		U_2 = 
		\begin{bmatrix}
		a & b \\
		-e^{i\phi}b^* & e^{i\phi}a^*
		\end{bmatrix}
		\end{equation*} 
		
		where, $a,b \in \mathbb{C}$ such that $|a|^2 + |b|^2 = 1$ and $\phi$ is the relative angle. Let us also assume that the source device supplies states which are up to unitary $U_4$ where
		
		\begin{equation*}
		U_4 = U_2 \otimes U_2
		\end{equation*}
		
		This implies that the states supplied by the source device are of the form
		
		\begin{align*}
		U_4 (\phi_{\A\B}) &= \frac{1}{\sqrt{2}} [\ket{00} + e^{i\phi}(a^*b-ab^*) \ket{01} +\\ 
		~ &~ e^{i\phi}(a^*b-ab^*) \ket{10} + e^{2i\phi}(a^{*^2}+b^{*^2}) \ket{11}]
		\end{align*}
		
		Bob's device measures in the basis $\{U_2\ket{0},U_2\ket{1}\}= \{(a\ket{0}-e^{i\phi}b^*\ket{1}),(b\ket{0}+e^{i\phi}a^*\ket{1})\}$ and $\{U_2\ket{0'},U_2\ket{1'}\}= \{(a\cos{\theta}+b\sin{\theta})\ket{0}+e^{i\phi}(a^*\sin{\theta}-b^*\cos{\theta})\ket{1}, (a\sin{\theta}-b\cos{\theta})\ket{0}-e^{i\phi}(a^*\cos{\theta}+b^*\sin{\theta})\ket{1}\}$ instead of the basis $\{\ket{0}, \ket{1}\}$ and $\{\ket{0'}, \ket{1'}\}$ respectively. Alice's POVM devices are either $D'^0=\{D'^0_0,D'^0_1,D'^0_2\}$ or $D'^1=\{D'^1_0,D'^1_1,D'^1_2\}$ for $a_i=0$ and $a_i=1$ respectively where
		
		\begin{align*}
		D'^0_0 & = \frac{1}{(1+\cos \theta)} (U_2\ket{1'}\bra{1'}U_2^{\dagger})\\
		D'^0_1 & = \frac{1}{(1+\cos \theta)} (U_2\ket{1}\bra{1}U_2^{\dagger})\\
		D'^0_2 & = \id - D'^0_0- D'^0_1,
		\end{align*}
		
		and
		
		\begin{align*}
		D'^1_0 & = \frac{1}{(1+\cos \theta)} (U_2\ket{0'}\bra{0'}U_2^{\dagger})\\
		D'^1_1 & = \frac{1}{(1+\cos \theta)} (U_2\ket{0}\bra{0}U_2^{\dagger})\\
		D'^1_2 & = \id - D'^1_0- D'^1_1,
		\end{align*}

		One can easily check that whenever Bob measures in $\{U_2\ket{0},U_2\ket{1}\}$ or $\{U_2\ket{0'},U_2\ket{1'}\}$ basis randomly on his qubit of the shared state $U_4 (\phi_{\A\B})$, the qubit at Alice's side will also collapse to $U_2\ket{0}$ or $U_2\ket{1}$ for the first case and $U_2\ket{0'}$ or $U_2\ket{1'}$ for the second case.
		
		Now, if Alice chooses POVM device $D'^0=\{D'^0_0,D'^0_1,D'^0_2\}$ for $a_i=0$, the probabilities of getting different outcomes for two different input states are as follows- 
		
		\begin{eqnarray*}
			\Pr(D'^0_0|U_2\ket{0}) & =& (1 - \cos{\theta})\\
			\Pr(D'^0_1|U_2\ket{0}) &=& 0\\
			\Pr(D'^0_2|U_2\ket{0}) &=& \cos{\theta}\\
			\Pr(D'^0_0|U_2\ket{0'}) & =& 0\\
			\Pr(D'^0_1|U_2\ket{0'}) &=& (1 - \cos{\theta})\\
			\Pr(D'^0_2|U_2\ket{0'}) &=& \cos{\theta}\\
		\end{eqnarray*}
		
		Similarly, if Alice chooses POVM device $D'^1=\{D'^1_0,D'^1_1,D'^1_2\}$ for $a_i=1$, the probabilities of getting different outcomes for two different input states are as follows-
		
		\begin{eqnarray*}
			\Pr(D'^1_0|U_2\ket{1}) & =& (1 - \cos{\theta})\\
			\Pr(D'^1_1|U_2\ket{1}) &=& 0\\
			\Pr(D'^1_2|U_2\ket{1}) &=& \cos{\theta}\\
			\Pr(D'^1_0|U_2\ket{1'}) & =& 0\\
			\Pr(D'^1_1|U_2\ket{1'}) &=& (1 - \cos{\theta})\\
			\Pr(D'^1_2|U_2\ket{1'}) &=& \cos{\theta}\\
		\end{eqnarray*}
		
		According to the protocol, whenever $a_i=0$ and 
		Alice gets $D'^0_0(D'^0_1)$, she outputs $r_{\A_i}=0 (1)$. Whenever, $a_i = 1$ and she gets $D'^1_0(D'^1_1)$, she outputs $r_{\A_i}=0 (1)$. So, in this case, the success probability of Alice to guess the $i$-th raw key bit $r_i$ of Bob will be,
		
		\begin{eqnarray}
		& &\Pr(r_{\A_i} = r_i) \nonumber\\
		&=&\Pr(r_{\A_i}=0,r_i=0) + \Pr(r_{\A_i}=1,r_i=1) \nonumber\\
		&=& (1 - \cos{\theta}).\nonumber
		\end{eqnarray}
		
		This shows that whenever the devices (both source and measurement devices) involved in this scheme are ``up to a unitary" of the original specified device, then also the proposed scheme satisfies the correctness condition.


\begin{thebibliography}{}
		
		\bibitem{CGKS95}
		B. Chor, O. Goldreich, E. Kushilevitz, M. Sudan, 
		Private information retrieval.  
		in {\it proceedings of the 36th Annual Symposium on Foundations of Computer Science}, p. 41-50, 1995.
		
		\bibitem{KO97}
		E. Kushilevitz,  R. Ostrovsky, 
		Replication is not needed: single database, computationally-private information retrieval. 
		in {\it proceedings of the 38th Annual Symposium on Foundations of Computer Science}, p. 364-373, 1997. 
		
		\bibitem{Micali99}
		C. Cachin, S. Micali, M. Stadler, Computationally Private Information Retrieval with Polylogarithmic Communication. 
		in {\it Advances in Cryptology - EUROCRYPT '99}, p. 402-414, 1999.
		
		\bibitem{Gentry05}
		C. Gentry, Z. Ramzan, 
		Single-Database Private Information Retrieval with Constant Communication Rate.
		{\it ICALP}, LNCS 3580, p. 803-815, 2005.
		
		\bibitem{William07}
		R. Ostrovsky, William E. Skeith III, 
		A Survey of Single-Database Private Information Retrieval: Techniques and Applications. 
		in {\it proceedings of the 10th International Conference on Practice and Theory in Public-Key Cryptography}, p. 393-411, 2007.
		
		\bibitem{Giovanni00}
		G.  Di Crescenzo, T. Malkin, R. Ostrovsky, Single Database Private Information Retrieval Implies Oblivious Transfer.
		{\it EUROCRYPT 2000}, LNCS 1807, p. 122-138, 2000. 
		
		\bibitem{GIKM98}
		Y. Gertner, Y. Ishai, E. Kushilevitz, T. Malkin, 
		Protecting Data Privacy in Private Information Retrieval Schemes.
		in {\it proceedings of the thirtieth annual ACM symposium on Theory of computing}, p. 151-160, 1998.
		
		\bibitem{Lo97}
		H. K. Lo, 
		Insecurity of quantum secure computations. 
		{\it Phys. Rev. A}, {56}, 2, 1154, 1997.
		
		
		\bibitem{KL20}
		W. Y. Kon, C. C. W. Lim. Provably-secure symmetric private information retrieval with quantum cryptography. {\it https://arxiv.org/abs/2004.13921}, 2020.
		
		
		\bibitem{ps}
		S. K. Mishra, P. Sarkar. 
		Symmetrically Private Information Retrieval. 
		{\it INDOCRYPT}, LNCS 1977, p. 225-236, 2000.
		
		\bibitem{Shor}
		P. W. Shor, 
		Algorithms for Quantum Computation: Discrete Logarithms and Factoring.
		in {\it Foundations of Computer Science (FOCS) 1994}, IEEE Computer Society Press, p. 124-134, 1994.
		
		\bibitem{GLM08}
		V. Giovannetti, S. Lloyd, L. Maccone. 
		Quantum Random Access Memory. 
		{\it Phys. Rev. Lett.}, {100}, 23, 230502, 2008.
		
		\bibitem{GLM10}
		V. Giovannetti, S. Lloyd, L. Maccone.
		Quantum Private Queries: Security Analysis. 
		{\it IEEE Trans. Info. Theory}, {56}, 7, p. 3465-3477, 2010.
		
		\bibitem{Ol11}
		L. Olejnik, 
		Secure quantum private information retrieval using phase-encoded queries. 
		{\it Phys. Rev. A}, {84}, 2, 022313, 2011.
		
		\bibitem{jakobi}
		M. Jakobi, C. Simon, N. Gisin, J. D. Bancal, D. Jean, C. Branciard, N. Walenta, H. Zbinden, Practical private database queries based on a quantum-key-distribution protocol. 
		{\it Phys. Rev. A}, {83}, 2, 022301, 2011.
		
		\bibitem{SARG}
		V. Scarani, A. Ac{\'i}n, G. Ribordy, N. Gisin, Quantum Cryptography Protocols Robust against Photon Number Splitting Attacks for Weak Laser Pulse Implementations.
		{\it Phys. Rev. Lett.}, {92}, 057901, 2004.
		
		\bibitem{GLWC12}
		F. Gao, B. Liu, Q. Y. Wen, H. Chen,
		Flexible quantum private queries based on quantum key distribution. 
		{\it Opt. Express}, {20}, p. 17411-17420, 2012.
		
		\bibitem{Rao}
		MV. P. Rao, M. Jakobi, 
		Towards communication-efficient quantum oblivious key distribution. 
		{\it Phys. Rev. A}, {87}, 1, 012331, 2013.
		
		\bibitem{Zhang}
		J. L. Zhang, F. Z. Guo, F. Gao, B. Liu, Q. Y. Wen, 
		Private database queries based on counterfactual quantum key distribution. 
		{\it Phys. Rev. A}, {88}, 2, 022334, 2013.
		
		\bibitem{Noh}
		T. G. Noh, 
		Counterfactual Quantum Cryptography. 
		{\it Phys. Rev. Lett.}, {103}, 23, 230501, 2009.
		
		\bibitem{Yang}
		Y. G. Yang, S. J. Sun, P. Xu, J. Tiang, Flexible protocol for quantum private query based on B92 protocol. 
		{\it Quant. Info. Proc.}, {13}, 805, 2014.
		
		\bibitem{b92}
		C. H. Bennett, 
		Quantum cryptography using any two nonorthogonal states. 
		{\it Phys. Rev. Lett.}, {68 (21)}, p. 3121-3124, 1992.
		
		\bibitem{Wei}
		C. Y. Wei, F. Gao, Q. Y. Wen, T. Y. Wang, Practical quantum private query of blocks based on unbalanced-state Bennett-Brassard-1984 quantum-key-distribution protocol. 
		{\it Sci. Rep.}, {4}, 7537, 2014.
		
		
		\bibitem{Liu}
		F. Gao, B. Liu, W. Huang, Q. Y. Wen, 
		QKD-based quantum private query without a failure probability.  
		{\it Sci. China-Phys. Mech. Astron.}, {58}, 100301, 2015.
		
		\bibitem{MPR17}
		A. Maitra, G. Paul, S. Roy, 
		Device-independent quantum private query.
		{\it Phys. Rev. A}, {95}, 4, 042344, 2017.
		
		\bibitem{iva87}
		I. D. Ivanovic,
		How to differentiate between non-orthogonal states.
		{\it Physics Lett. A}, 123, 6, 257-259, 1987.
		
		\bibitem{PT98}
		A. Peres, D. R. Terno, 
		Optimal distinction between non-orthogonal quantum states.
		{\it J. Phys. A: Math. Gen.}, 31, 7105, 1998.
		
		\bibitem{lo2}
		H. K. Lo, M. Curty, B. Qi, Measurement-Device-Independent Quantum Key Distribution. 
		{\it Phys. Rev. Lett.}, {108},
		p. 130503, 2012.
		
		\bibitem{pirandola}
		S. L. Braunstein, S. Pirandola, Side-Channel-Free Quantum Key Distribution. 
		{\it Phys. Rev. Lett.}, {108}, p. 130502, 2012.
		
		
		\bibitem{W83}
		S. Wiesner. 
		Conjugate coding. 
		{\it SIGACT News}, 15(1):78-88, 1983.
		
		\bibitem{SGL85}
		E. Shimon, O. Goldreich and A. Lempel. 
		A randomized protocol for signing contracts. 
		{\it Communications of the ACM} 28.6, 637-647, 1985.
		
		\bibitem{BBCS91}
		C. H. Bennett, G. Brassard, C. Cr\'epeau, MH. Skubiszewska. 
		Practical Quantum Oblivious Transfer. 
		{\it Proceedings of CRYPTO '91}, 576, 1991.
		
		\bibitem{NP05}
		M. Naor and B. Pinkas.
		Computationally secure oblivious transfer.
		{\it Journal of Cryptology}, 18:1, 2005.
		
		\bibitem{SSS09}
		L. Salvail, C. Schaffner and M. Sotakova. 
		On the power of two-party quantum cryptography. 
		In {\it Proceedings of ASIACRYPT 2009}, 5912, pp. 70--87, 2009.
		
		\bibitem{JRS09}
		R. Jain, J. Radhakrishnan and P. Sen. 
		A new information-theoretic property about quantum states with an application to privacy in quantum communication. 
		{\it Journal of ACM}, 56(6): 33, 2009.
		
		\bibitem{CKS13}
		A. Chailloux, I. Kerenidis and J. Sikora. 
		Lower bounds for quantum oblivious transfer.
		{\it Quantum Info. Comput.}, 13, p- 158-177, 2013.
		
		\bibitem{HLZG}
		Y. F. Huang, C. F. Li, Y. S. Zhang, G. C. Guo. Quantum strategies of quantum measurement. 
		{\it Phys. Lett. A}, {280: 257}, 2001.
		
		
		
		\bibitem{FG99}
		C. A. Fuchs and J. V. de Graaf. 
		Cryptographic distinguishability measures for quantum-mechanical states.
		{\it IEEE Trans. Inf. Theory} 45, 1216, 1999.
		
		\bibitem{Wilde17}
		C. W. Helstrom,
		Quantum Detection and Estimation Theory. 
		{\it Mathematics in Science and Engineering}, 123 (Academic Press, New York), 1976.
		
		
		\bibitem{KRS09}
		R. Konig, R. Renner, C. Schaffner, 
		The Operational Meaning of Min- and Max-Entropy.
		{\it IEEE Trans. Info. Theory}, {55}, 9, p. 4337-4347, 2009.
		
		\bibitem{BY21}
		A. Broadbent, P. Yuen,
		Device-Independent Oblivious Transfer from the Bounded-Quantum-Storage-Model and Computational Assumptions. 
		{\it arxiv.org/abs/2111.08595}, 2021.
		
		\bibitem{ruv13}
		B. Reichardt, F. Unger, U. Vazirani, 
		A classical leash for a quantum system: Command of quantum systems via rigidity of CHSH games. 
		{\it Nature}, {496}, 7446, p. 456, 2013.
		
		\bibitem{TSVBB20}
		A. Tavakoli, M. Smania, T. V{\'e}rtesi, N. Brunner, M. Bourennane, Self-testing non-projective quantum measurements in prepare-and-measure experiments. {\it Science Advances}, 6, 16, 2020.
		
		\bibitem{kan17}
		J. Kaniewski, Self-testing of binary observables based on commutation.
		{\it Phys. Rev. A}, {95}, 6, 062323, 2017.
		
		\bibitem{WH}
		W. Hoeffding, Probability Inequalities for Sums of Bounded Random Variables. {\it Journal of the American Statistical Association}, 58(301):13-30, 1963.
		
		\bibitem{CLRS01}
		T. Cormen, C. Leiserson, R. Rivest, C. Stein, Introduction to Algorithms. {\it The MIT Press, 2 edition}, 2001.
		
		\bibitem{BM18}
		J. Basak, S. Maitra, Clauser-Horne-Shimony-Holt versus three-party pseudo-telepathy: on the optimal number of samples in device-independent quantum private query. {\it Quantum Inf. Process.}, 17, 77, 2018.
		
		
		
		
	\end{thebibliography}
\end{document}